\documentclass[envcountsect,orivec]{eptcs}
 % Name of the event you are submitting to
% \providecommand{\publicationstatus}{}
\usepackage{etex} % This must be the first package to be loaded
\usepackage[]{graphicx}
\usepackage{verbatim}

\newif\ifignore % when set to true, additional text appears containing
                % further explanations or proofs (see \auxproof below)
%\ignoretrue 
\ignorefalse
%%%%%%%%%%%%%%%%%%%%%%%%%%%%%%%%%%%%%%%%%%%%%%%%%%%%%%%
% auxproof
\newcommand{\auxproof}[1]{
  \ifignore\mbox{}\newline
  \textbf{BEGIN: AUX-PROOF} \dotfill\newline
  {#1}\mbox{}\newline
  \textbf{END: AUX-PROOF}\dotfill\newline
  \fi}

% \usepackage[T1]{fontenc}   % needed to have \textsc in section headings, 
% works only with the computer modern font.
% With the times font this is not necessary.

%%%%%%%%%%%%%%%%%%%%%%%%%%%%%%%%%%%%%%%%%%%%%%%%%%%%%%%
% macros specific to lncs.cls
%%%%%%%%%%%%%%%%%%%%%%%%%%%%%%%%%%%%%%%%%%%%%%%%%%%%%%%
\usepackage{amsthm}
\usepackage{fancybox,amssymb,amstext,amsmath,stmaryrd,wasysym,cite,proof,mathtools,multirow,stackrel}
%% Font setting in section titles 
\SetSymbolFont{stmry}{bold}{U}{stmry}{m}{n}
\SetSymbolFont{wasy}{bold}{U}{wasy}{m}{n}
%% \newcommand{\myalgnumfont}{\fontsize{4pt}{5pt}\selectfont}
%% \algrenewcommand{\alglinenumber}[1]{\hss\myalgnumfont #1:}

%\usepackage[pdf]{pstricks} 
%\usepackage[english]{babel}
% amsmath option: allows page breaks inside a
% multi-line equation like align*

%%%%%%%%%%%%%%%%%%%%%%%%%%%%%%%%%%%%%%%%%%%%%%%%%%%%%%%
%% algorithm

\usepackage{algorithm}
\usepackage[noend]{algpseudocode} 
\usepackage{etoolbox}

\makeatletter
% start with some helper code
% This is the vertical rule that is inserted
\newcommand*{\algrule}[1][\algorithmicindent]{%
   \makebox[#1][l]{%
       \hspace*{.2em}% <------------- This is where the rule starts from
       \vrule height .75\baselineskip depth .25\baselineskip
   }
}

\newcount\ALG@printindent@tempcnta
\def\ALG@printindent{%
    \ifnum \theALG@nested>0% is there anything to print
    \ifx\ALG@text\ALG@x@notext% is this an end group without any text?
    % do nothing
    \else
    \unskip
    % draw a rule for each indent level
    \ALG@printindent@tempcnta=1
    \loop
    \algrule[\csname ALG@ind@\the\ALG@printindent@tempcnta\endcsname]%
    \advance \ALG@printindent@tempcnta 1
    \ifnum \ALG@printindent@tempcnta<\numexpr\theALG@nested+1\relax
    \repeat
    \fi
    \fi
}
% the following line injects our new indent handling code in place of the default spacing
\patchcmd{\ALG@doentity}{\noindent\hskip\ALG@tlm}{\ALG@printindent}{}{\errmessage{failed to patch}}
\patchcmd{\ALG@doentity}{\item[]\nointerlineskip}{}{}{} % no spurious vertical space
% end vertical rule patch for algorithmicx
\makeatother

%---------------------------------------------------------
% xy
\usepackage[all,pdftex]{xy}
% For faster compilation. Entries must be wrapped in curly braces!
\CompileMatrices 
\xyoption{v2}
\xyoption{curve}
\xyoption{2cell}
\SelectTips{cm}{}  % Tips (of arrows) are in accordance with Computer Modern
\UseAllTwocells
%\SilentMatrices

%%%%%%%%%%%%%%%%%%%%%%%%%%%%%%%%%%%%%%%%%%%%%%%%%%%%%%%
% wrapfig to save space 
\usepackage{wrapfig}
\setlength{\intextsep}{.1\intextsep}
\setlength{\columnsep}{.7\columnsep}
%%%%%%%%%%%%%%%%%%%%%%%%%%%%%%%%%%%%%%%%%%%%%%%%%%%%%%%
% theorem/proof environment

\theoremstyle{plain}
\newtheorem{mytheorem}{Theorem}[section]

\newtheorem{myproposition}[mytheorem]{Proposition}

\theoremstyle{definition}

\newtheorem{mydefinition}[mytheorem]{Definition}

% \spnewtheorem*{myproof}{Proof}{\itshape}{\rmfamily}
% \def\myproof{\proof}   % provided by LNCS style, without qed 
                         % I don't like it since it has no space
             % after the environment
%%%%%%%%%%%%%%%%%%%%%%%%%%%%%%%%%%%%%%%%%%%%%%%%%%%%%%%
% qed (using original style of LNCS)

\def\myqed{\par\qedhere}

\usepackage{xcolor}
\definecolor{dgreen}{rgb}{0, .6, 0}

\usepackage{pifont}% http://ctan.org/pkg/pifont
%
%

% My Primitives
\newcommand{\bu}{\mathbf{u}}
\newcommand{\bv}{\mathbf{v}}
\newcommand{\bw}{\mathbf{w}}
\newcommand{\N}{{\mathbb{N}}}

\newcommand{\R}{{\mathbb{R}}}
\newcommand{\Rm}{\R^{m}}
\newcommand{\M}{\mathcal{M}}
\newcommand{\Rpos}{\R_{>0}}

\newcommand{\Var}{\mathbf{Var}}
\newcommand{\Falsify}{\mathsf{Falsify}}
%%For lncs
\newcommand{\ttrue}{\mathrm{t{\kern-1.5pt}t}}
\newcommand{\ffalse}{\mathrm{f{\kern-1.5pt}f}}

\newcommand{\Table}[1]{Table~\ref{tab:#1}}
\newcommand{\Def}[1]{Def.~\ref{def:#1}}

\newcommand{\Fig}[1]{Fig~\ref{fig:#1}}

\newcommand{\place}{\underline{\phantom{n}}\,}

\newcommand{\STL}{\textbf{STL}}

\newcommand{\UntilOp}[1]{\mathbin{\mathcal{U}_{#1}}}

\newcommand{\DiaOp}[1]{\Diamond_{#1}}
\newcommand{\BoxOp}[1]{\square_{#1}}

\newcommand{\Robust}[2]{{ \llbracket #1, #2 \rrbracket}}
\newcommand{\sem}[1]{\llbracket #1 \rrbracket} 
\newcommand{\Defeq}{:=}

\newcommand{\Vee}[1]{{{\bigsqcup_{#1}}}}
\newcommand{\Wedge}[1]{{{\bigsqcap_{#1}}}}

\newcommand{\Rnn}{\R_{\ge 0}}

\makeatletter
\newcommand{\figcaption}[1]{\def\@captype{figure}\caption{#1}}
\newcommand{\tblcaption}[1]{\def\@captype{table}\caption{#1}}
\makeatother

%%%%%%%%%%%%%%%%%%%%%%%%%%%%%%%%%%%%%%%%%%%%%%%%%%%%%%%%%%%
%% Added by Masaki Waga
%%%%%%%%%%%%%%%%%%%%%%%%%%%%%%%%%%%%%%%%%%%%%%%%%%%%%%%%%%%

\newif\iftikzgnuplot
\tikzgnuplottrue 
%\tikzgnuplotfalse
\setlength\textfloatsep{0.3em}
\setlength\abovecaptionskip{0.2em}

%%%%%%%%%%%%%%%%%%%%%%%%%%%%%%%%%%%%%%%%%%%%%%%%%%%%%%%%%%%
%%% style files
%%%%%%%%%%%%%%%%%%%%%%%%%%%%%%%%%%%%%%%%%%%%%%%%%%%%%%%%%%%

% \usepackage[hyphens]{url}
\usepackage{gnuplot-lua-tikz}
\usepackage{pgfplotstable} 
\pgfplotsset{compat=1.12}
\usepackage{booktabs}
\usepackage{tabularx}

\usepackage{arydshln}
\usepackage{hyperref}

%% For deep itemize
\usepackage{enumitem}
\setlistdepth{20}
\renewlist{itemize}{itemize}{20}
\setlist[itemize]{label=\textbullet}

%%%%%%%%%%%%%%%%%%%%%%%%%%%%%%%%%%%%%%%%%%%%%%%%%%%%%%%%%%%
%%% primitives
%%%%%%%%%%%%%%%%%%%%%%%%%%%%%%%%%%%%%%%%%%%%%%%%%%%%%%%%%%%

%\newcommand{\Init}{\mathit{Init}}

%\newcommand{\CStates}{\mathit{CStates}}

%\newcommand{\nat}{\mathbb{N}}

\DeclareMathOperator*{\argmin}{arg\,min}
%\newcommand{\myparagraph}[1]{\noindent\textit{#1}\;}

%%%%%%%%%%%%%%%%%%%%%%%%%%%%%%%%%%%%%%%%%%%%%%%%%%%%%%%%%%%
%% Tikz
%%%%%%%%%%%%%%%%%%%%%%%%%%%%%%%%%%%%%%%%%%%%%%%%%%%%%%%%%%%

\usepackage{tikz}
\usetikzlibrary{automata,positioning,matrix,shapes.callouts}
\tikzset{
region/.style={
rectangle,
rounded corners,
draw=black,very thick
},
% make the circle further.
accepting/.style={double distance=2pt}
}

\makeatletter
\newcommand{\Biggg}{\bBigg@{4}}
\newcommand{\Bigggg}{\bBigg@{5}}
\newcommand{\Biggggg}{\bBigg@{6}}
\makeatother

\title{Time-Staging Enhancement of\\  Hybrid System Falsification
     \thanks{Presented at SNR'2018. G. Ernst and Z. Zhang were then at the National Institute of Informatics.}}

\author{
    Gidon Ernst
    \institute{LMU Munich, Germany}
    \email{gidon.ernst@lmu.de}
\and
    Ichiro Hasuo
    \institute{National Institute of Informatics, Tokyo, Japan} 
    \email{hasuo@nii.ac.jp}
\and
    Zhenya Zhang
    \institute{Kyushu University, Japan} 
    \email{zhang.zhenya.623@m.kyushu-u.ac.jp}
\and Sean Sedwards
    \institute{University of Waterloo, Waterloo, Canada}
    \email{sean.sedwards@uwaterloo.ca }
}

% \footnotetext{Work done while at the National Institute of Informatics, Tokyo, Japan}

\begin{document}

\maketitle

\begin{abstract}
\emph{Optimization-based falsification}  employs stochastic optimization algorithms to search for error input of hybrid systems.
In this paper we introduce a simple idea to enhance falsification, namely \emph{time staging}, that allows the \emph{time-causal} structure of time-dependent signals to be exploited by the optimizers.
Time staging consists of running a falsification solver multiple times, from one interval to another, incrementally constructing an input signal candidate.
Our experiments show that time staging  can dramatically increase performance in some realistic examples.
We also present theoretical results that suggest the kinds of models and specifications for which time staging is likely to be effective.
\end{abstract}

\section{Introduction}
\label{sec:introduction}

\paragraph{Hybrid Systems}
Quality assurance of \emph{cyber-physical systems} (CPS) has been recognized as an important challenge,
where many CPS are {\em hybrid systems} that
combine the discrete dynamics of computers and the continuous dynamics of physical components. 
Unfortunately, analysis of hybrid systems poses unique challenges, such as the limited applicability of \emph{formal verification}.
In formal verification one aims to give a mathematical proof for a system's correctness. This is much harder for hybrid systems than for computer software/hardware,
 where the presence of continuous dynamics makes many problems more complex or even undecidable (e.g.\ reachability in hybrid automata).

\vspace*{.3em}
   \begin{wrapfigure}[7]{r}{0pt}
     \includegraphics[width=.6\textwidth]{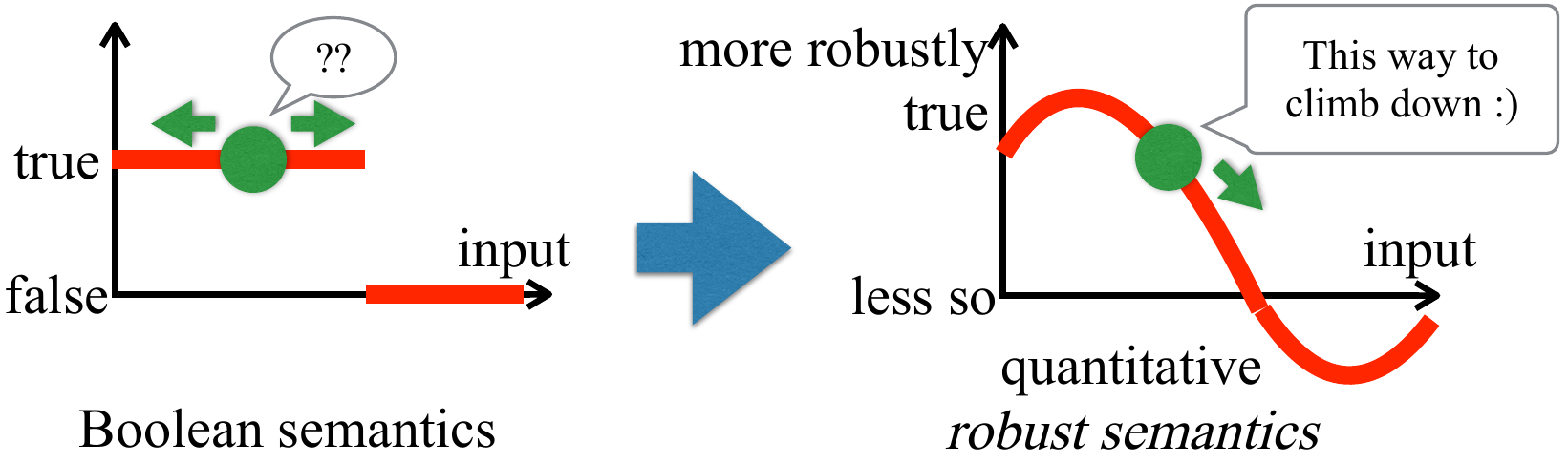}
    \caption{From Boolean to robust semantics}
    \label{fig:fromQualitativeToQuantitative}
\end{wrapfigure}
\paragraph{Optimization-Based Falsification}
Because of these difficulties, an increasing number of  researchers are turning to  \emph{opti\-mization-based falsification} as a quality assurance measure. It is a testing method rather than that of formal verification; the problem is formalized as follows. 
\begin{center}
   \begin{minipage}{0.65\textwidth}
    \begin{itemize}
    \item{\textbf{Given:}} 
      a \emph{model} $\mathcal{M}$ (that takes an input signal $\bu$
      and  yields  an output signal $\mathcal{M}(\bu)$), and
      a \emph{specification} $\varphi$ (a temporal formula)
    \item{\textbf{Answer:}} 
      \emph{error input}, that is, an input signal $\bu$ such
      that the corresponding output $\mathcal{M}(\bu)$ violates $\varphi$ 
    \end{itemize}
  \end{minipage}
  \begin{math}
   \xymatrix@1@+1.5em{
   {}
     \ar[r]^-{\bu}
   &
   {\xybox{ *++[F]{\mathcal{M}} }}
     \ar[r]^-{\mathcal{M}(\bu)}_-{\not\models\varphi \; ?}
   &
   {}
   }
  \end{math}
\end{center}
In the optimization-based falsification approach, the above falsification problem is turned into an optimization problem. This is possible thanks to \emph{robust semantics} of temporal formulas~\cite{DBLP:journals/tcs/FainekosP09}. Instead of the Boolean satisfaction relation $\bv\models\varphi$, robust semantics assigns a quantity $\sem{\bv,\varphi}\in\R\cup\{\infty,-\infty\}$ that tells us, not only whether $\varphi$ is true or not (by the sign), but also \emph{how robustly} the formula is true or false. This allows one to employ hill-climbing optimization (see Fig.~\ref{fig:fromQualitativeToQuantitative}): we iteratively generate input signals, in the direction of decreasing robustness, hoping that eventually we hit negative robustness. 

 Optimization-based falsification is a subclass of \emph{search-based testing}: it adaptively chooses test cases (input signals $\bu$) based on previous observations. One can use stochastic algorithms for optimization (such as simulated annealing), which turn out to be much more scalable than many model checking algorithms that rely on exhaustive search. Note also that the system model $\mathcal{M}$ can be  black box: it is enough to know the correspondence between input $\bu$ and output $\mathcal{M}(\bu)$.
 An error input $\bu$ is concrete evidence for the system's need for improvement, and thus has great appeal to practitioners.

The approach of  optimization-based falsification was initiated in~\cite{DBLP:journals/tcs/FainekosP09} and has been actively pursued ever since~\cite{DBLP:conf/tacas/AnnpureddyLFS11,DBLP:conf/cav/AdimoolamDDKJ17,DBLP:conf/atva/DeshmukhJKM15,DBLP:conf/formats/KuratkoR14,DBLP:conf/cav/Donze10,DBLP:conf/formats/DonzeM10,DBLP:conf/nfm/DreossiDDKJD15}. There are now mature tools, such as Breach~\cite{DBLP:conf/cav/Donze10} and S-Taliro~\cite{DBLP:conf/tacas/AnnpureddyLFS11},
which work with industry-standard Simulink models.

\paragraph{Contribution}
 We introduce
a simple idea of \emph{time staging} for  enhancement of optimization-based falsification.
Time staging consists of running a falsification solver repeatedly, from one input segment to another,  incrementally constructing an  input signal. 

In general, in solving a concrete problem $\mathcal{C}$ by a metaheuristic $\mathcal{H}$ (such as stochastic optimization, evolutionary computation, etc.), 
a key to success
is to communicate as much information as possible in the translation from $\mathcal{C}$ to $\mathcal{H}$---that is, to let $\mathcal{H}$ exploit  structures unique to $\mathcal{C}$. 
Our idea of time staging follows this philosophy. More specifically, via time staging we communicate the \emph{time-causal} structure of time-dependent signals---a structure that is present in some instances of the falsification problem but not in optimization problems in general---to stochastic optimization solvers.

Our implementation of time-staged falsification is
based on Breach~\cite{DBLP:conf/cav/Donze10}. We show that this simple idea can dramatically enhance its performance in some examples.
We also present some theoretical considerations on the kinds of problem instances where time staging is likely to work, and some results that aid implementation of time staging.

\paragraph{Structure of the Paper} In~\S{}\ref{sec:preview} we informally outline optimization-based falsification and illustrate the idea of time staging. We then turn to formal developments:
in~\S{}\ref{sec:unstagedFalsification} we review existing falsification works and in~\S{}\ref{sec:timeStagedFalsification} we present our algorithm, augmented by some theoretical results that aid its implementation.
\S{}\ref{sec:theoreticalBoundaryCaseResults} is devoted to the theoretical consideration of two specific settings in which time staging is guaranteed to work well. These settings will serve as useful ``rules of thumb'' for practical applications.
In~\S{}\ref{sec:experiments} we discuss our implementation and experimental results.
We conclude in~\S{}\ref{sec:conclusion}.

\section{Schematic Overview:  Falsification and Time Staging}
\label{sec:preview}
\begin{figure}[tbp]
\centering
  \includegraphics[scale=.6]{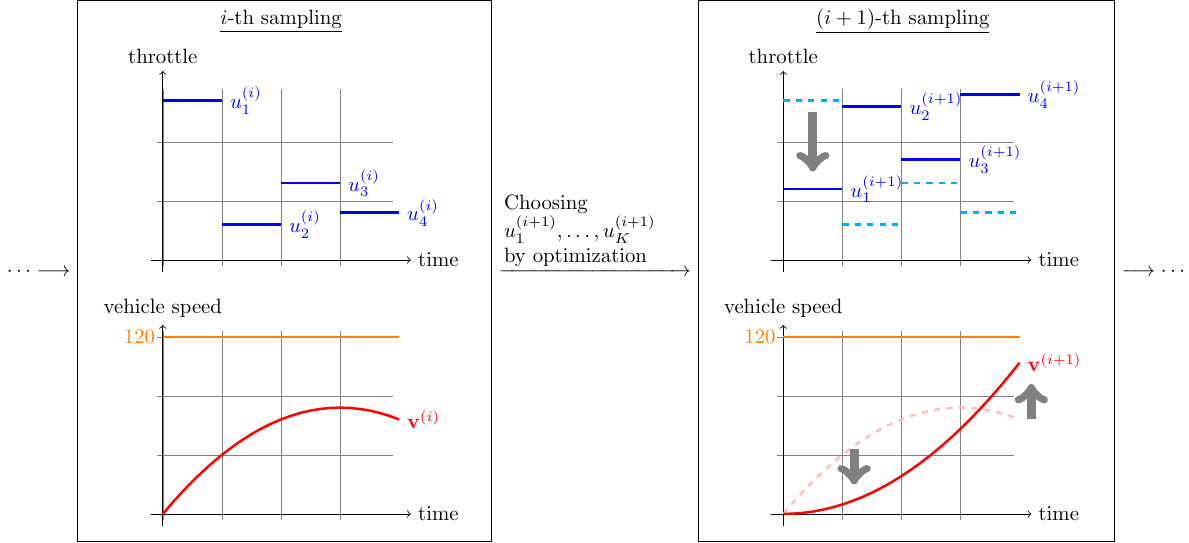}
\caption{Conventional optimization-based falsification (without time staging). }
\label{fig:nonstaged}

\vspace*{1.5em}
\centering
 \includegraphics[scale=.6]{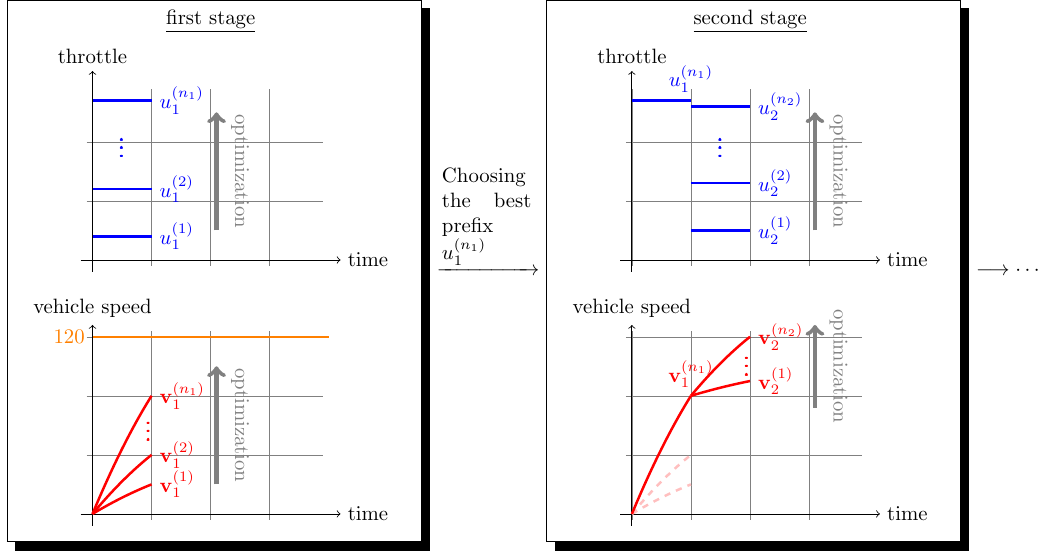}
\caption{Falsification with time staging}
\label{fig:staged}

\vspace*{1.5em}
\centering
\begin{minipage}{.5\textwidth}
  \includegraphics[scale=.73]{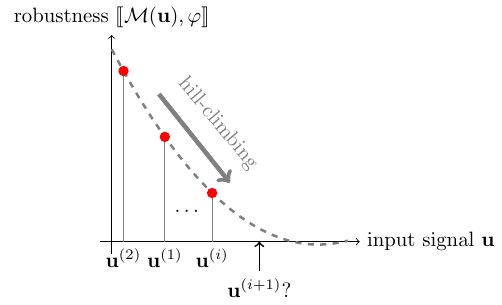}
 \caption{Hill-climbing optimization in falsification}
 \label{fig:hillClimbing}
\end{minipage}
\;
\begin{minipage}{.48\textwidth}
 \begin{tabular}{l}
 \raisebox{-.5\height}{\includegraphics[width=.3\textwidth]{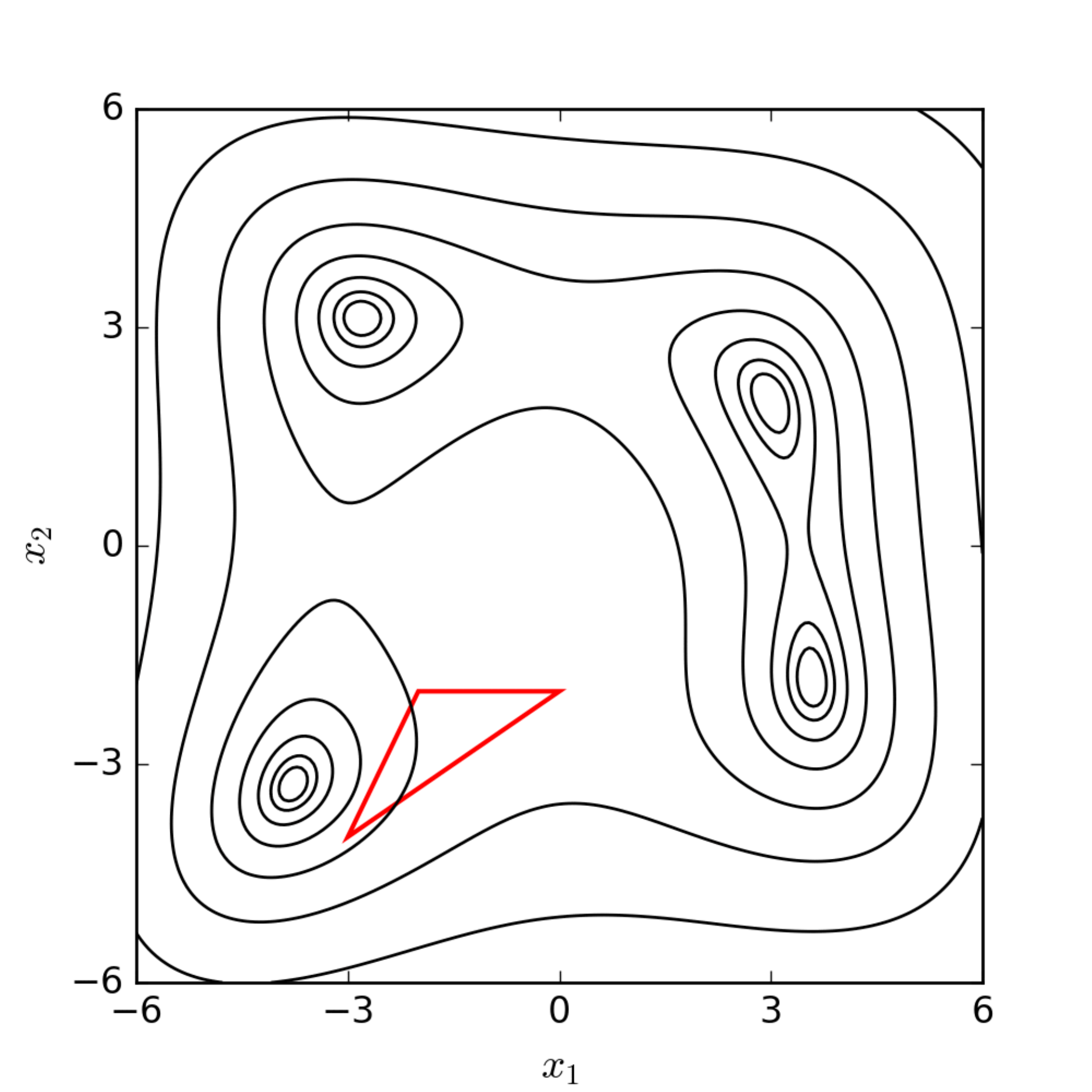}}
 $\Rightarrow$
 \raisebox{-.5\height}{\includegraphics[width=.3\textwidth]{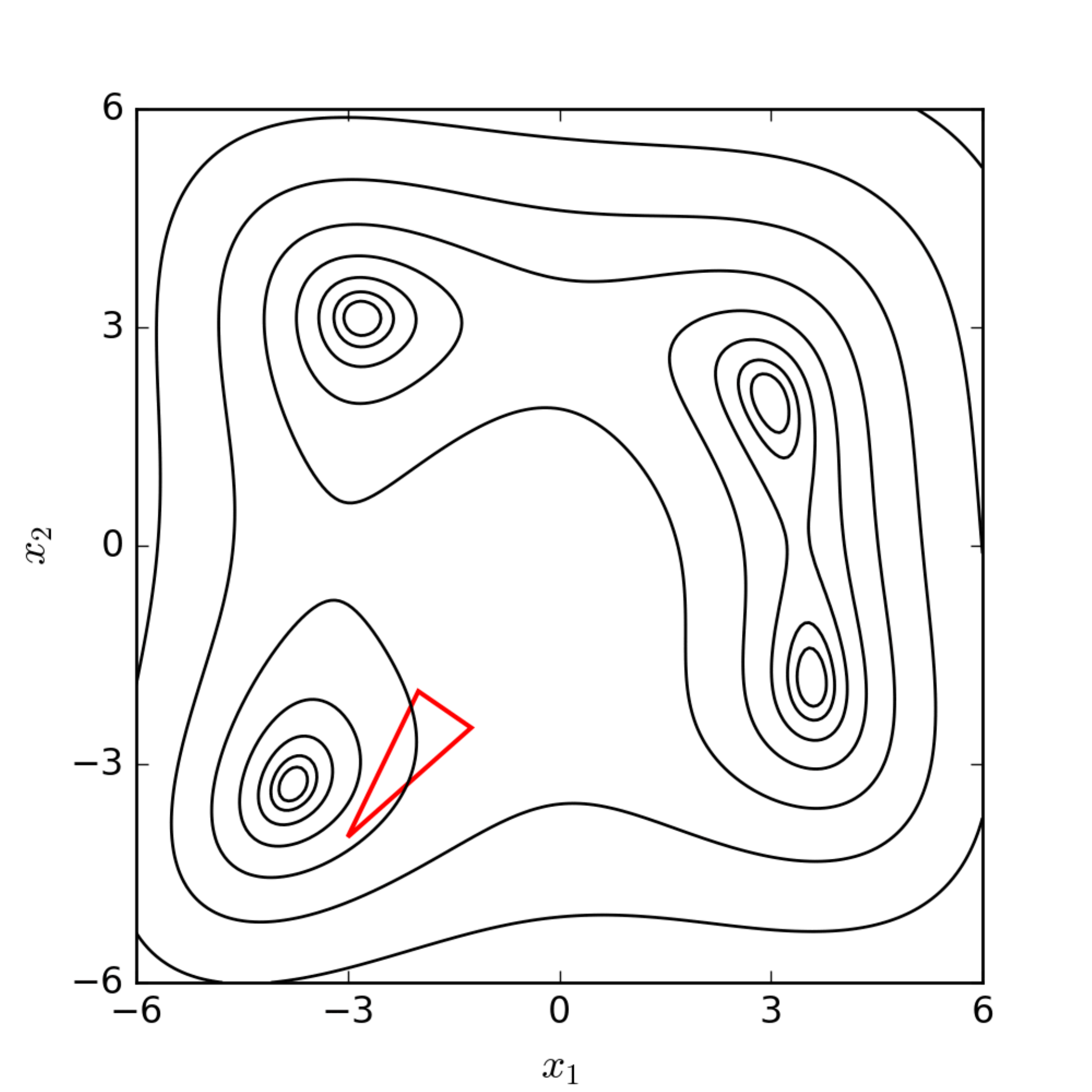}}
 $\Rightarrow$
 \\
 \raisebox{-.5\height}{\includegraphics[width=.3\textwidth]{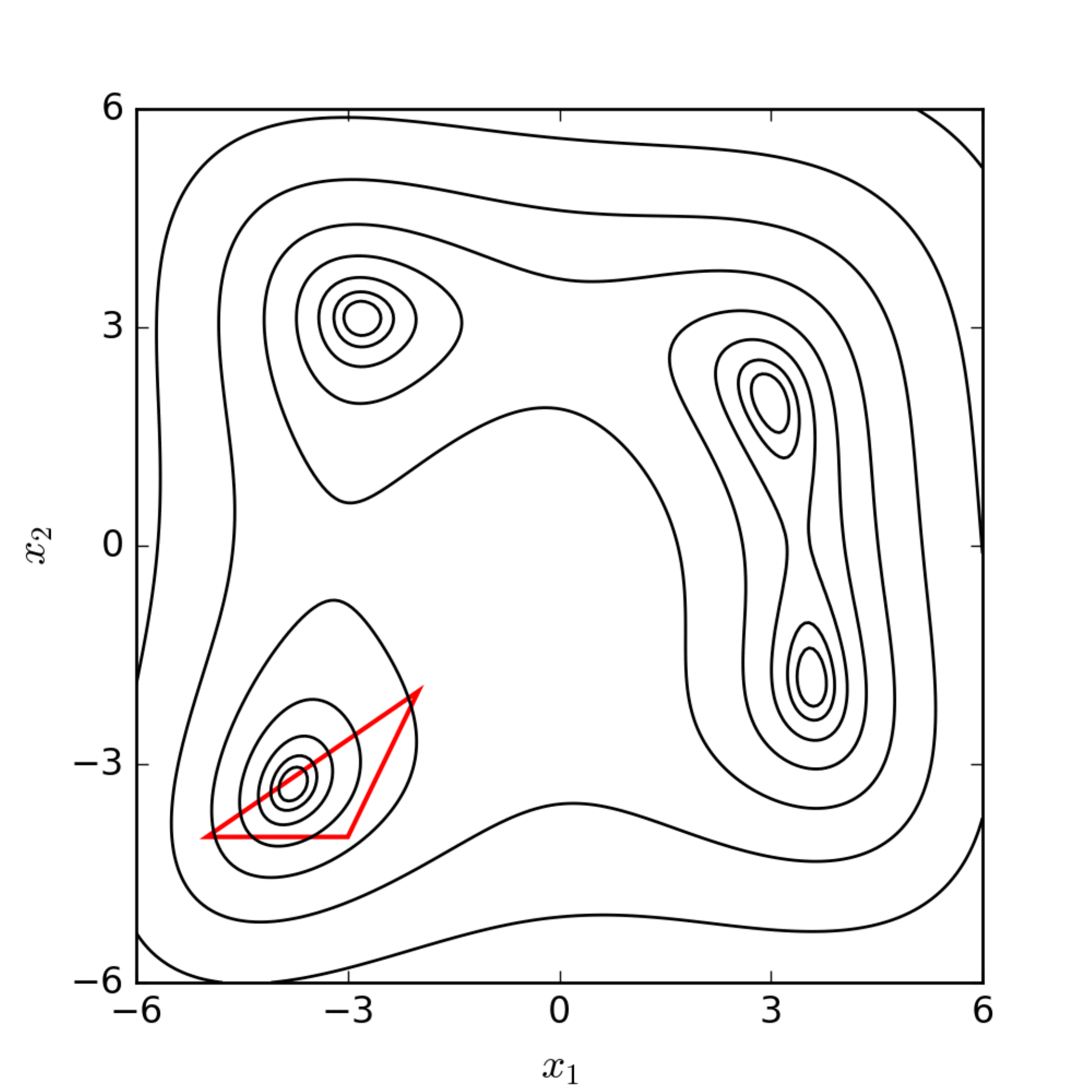}}
 $\Rightarrow\cdots\Rightarrow$ 
 \raisebox{-.5\height}{\includegraphics[width=.3\textwidth]{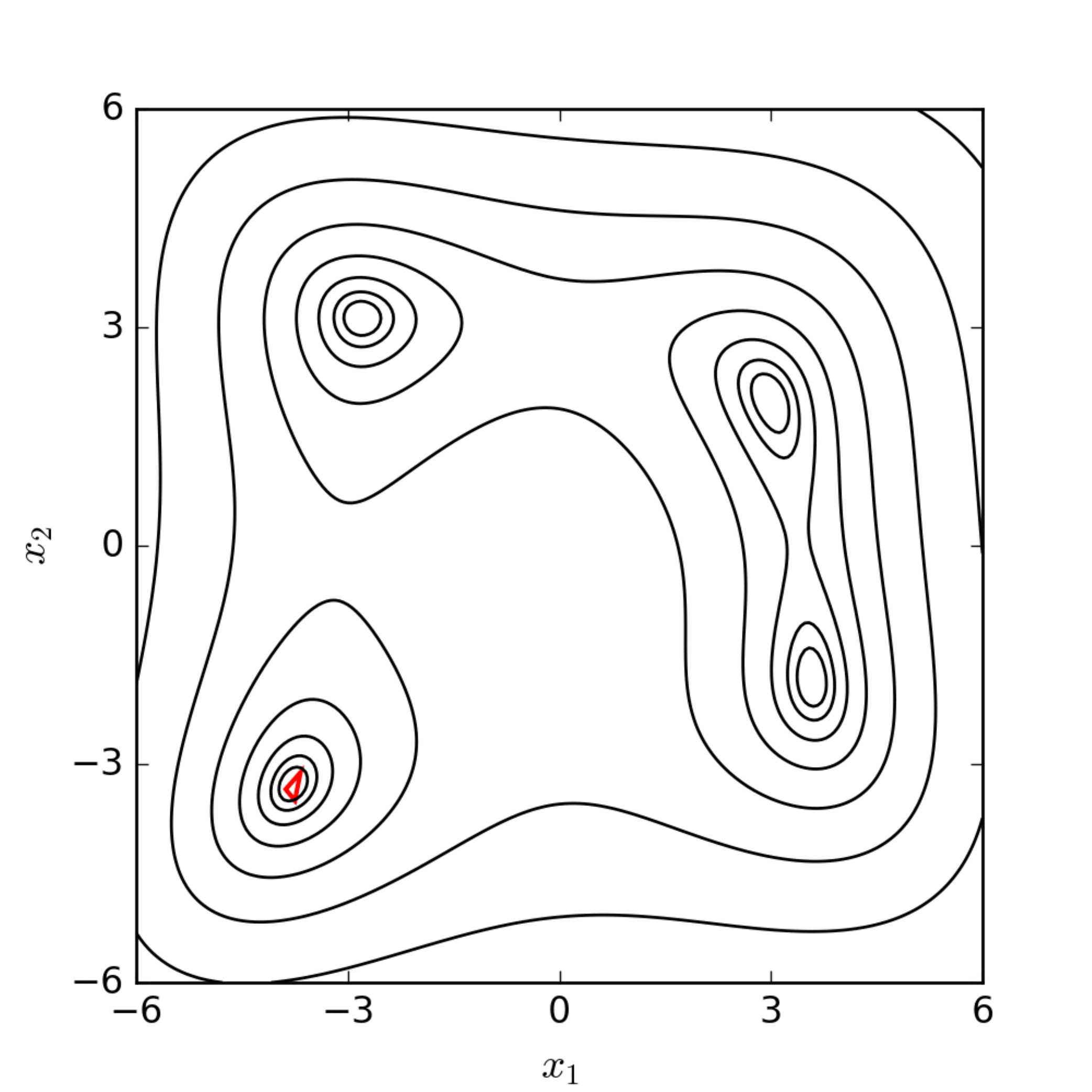}}
 \end{tabular}
 \caption{Nelder-Mead optimization. 
Here the input space is the two-dimensional square, and the (unknown) score function is depicted by contour lines. 
Figures are from Wikipedia
}
 \label{fig:neldermead}
\end{minipage}
\end{figure}
 We illustrate  falsification and time staging, informally with an example.

\paragraph{Example Setting}
 We take as a system model $\mathcal{M}$ a simple automotive powertrain whose input
signal is $1$-dimensional (the throttle $u$) and whose output
signal is the vehicle speed $v$. We assume that $\mathcal{M}$ exhibits
the following natural behavior: the larger $u$ is, the quicker $v$
grows. Let our specification be $\varphi\equiv\bigl(\Box(v\le
120)\bigr)$, where $\equiv$ denotes the syntactic equality.  To falsify $\varphi$ the vehicle speed $v$
must exceed $120$.
 From the assumption about $\mathcal{M}$'s behavior,
we expect  $u$ to be large in a falsifying input signal.
Note that this is a simplified version of one of our experiments in~\S{}\ref{sec:experiments}.

\paragraph{Optimization-Based Falsification}
Fig.~\ref{fig:nonstaged} illustrates how a conventional optimization-based
falsification procedure
works. In the $i$-th
sampling one tries an input signal $\bu^{(i)}$.
Following the falsification literature we focus on piecewise constant signals. Thus a signal
$\bu^{(i)}$ is represented by a sequence
$(u_{1}^{(i)},\dotsc,u_{K}^{(i)})$ of real numbers. See the top left of Fig.~\ref{fig:nonstaged}.
The corresponding output signal $\bv^{(i)}=\mathcal{M}(\bu^{(i)})$ is shown below it.

Since $\bv^{(i)}$ does not reach the threshold $120$, we move on to the $(i+1)$-th sampling and try a new input signal 
$\bu^{(i+1)}=(u_{1}^{(i+1)},\dotsc,u_{K}^{(i+1)})$.
The choice of $\bu^{(i+1)}$ is made by an optimization algorithm.
Specifically, the optimization algorithm observes the results of the previously sampled  input signals $\bu^{(1)},\dotsc,\bu^{(i)}$---especially the
robustness value $\sem{\mathcal{M}(\bu^{(i)}),\varphi}$ that each input
$\bu^{(i)}$ achieves. 
In the current setting where $\varphi\equiv\bigl(\Box(v\le 120)\bigr)$,
 the robustness value is simply the difference between $120$ and the peak vehicle speed.
 The optimization algorithm tries to derive some general tendency, which it then uses
 to increase the probability that the next input signal $\bu^{(i+1)}$ will make the robustness smaller (i.e.\ the peak vehicle speed higher). 

\emph{Hill climbing}  is a prototype of such optimization algorithms.
Its use in falsification is illustrated in Fig.~\ref{fig:hillClimbing}, where $\bu=(u_{1},\dotsc,u_{K})$ is depicted as one-dimensional for clarity.
The actual curve for the robustness value $\sem{\mathcal{M}(\bu),\varphi}$ (gray and dashed) is unknown. Still the previous observations under input $\bu^{(1)},\dotsc,\bu^{(i)}$ suggest that  to the right is the  climbing down direction.  The next candidate $\bu^{(i+1)}$ is picked accordingly, towards negative robustness. Another well-known optimization algorithm is the \emph{Nelder-Mead algorithm}. See Fig.~\ref{fig:neldermead}, where the input space is two-dimensional and the (unknown) robustness function is depicted by contour lines.

We see in the right of Fig.~\ref{fig:nonstaged}
that the new input signal $\bu^{(i+1)}=(u^{(i+1)}_{1},\dotsc,u^{(i+1)}_{K})$
leads to a corresponding output signal $\bv^{(i+1)}$
that reduces the robustness value by achieving a higher peak speed. 
We continue this way,
 $\bu^{(i+2)},\bu^{(i+3)},\dotsc$, hoping to eventually reach a falsifying input signal. 

\paragraph{Absence of the Time-Causal Information}
A closer look at 
Fig.~\ref{fig:nonstaged}  reveals room for
improvement. In Fig.~\ref{fig:nonstaged}, the new input signal
\raisebox{0pt}[0pt]{$\bu^{(i+1)}$} indeed achieves a smaller overall robustness 
$\sem{\mathcal{M}(\bu),\varphi}$ than \raisebox{0pt}[0pt]{${\bu^{(i)}}$}. 
 However, its initial segment \raisebox{0pt}[0pt]{${u^{(i+1)}_{1}}$} is  smaller than \raisebox{0pt}[0pt]{${u^{(i)}_{1}}$};
 consequently the vehicle speed \raisebox{0pt}[0pt]{${\bv^{(i+1)}}$} is smaller than \raisebox{0pt}[0pt]{${\bv^{(i)}}$} in the first few seconds.
 Keeping the initial segment \raisebox{0pt}[0pt]{${u^{(i)}_{1}}$} would have achieved an even greater peak speed.

The problem here is that the \emph{time-causal} structure inherent in the problem is not explicitly communicated to the optimization algorithm. The relevant structure is more specifically  \emph{time monotonicity}:  an input prefix that achieves smaller robustness (i.e.\ a greater peak speed) is more likely to extend to a full falsifying input signal. Although it is possible that a stochastic optimization algorithm somehow ``learns'' time monotonicity, it is not guaranteed, because the structure of input spaces (the horizontal axis in Fig.~\ref{fig:hillClimbing} and the squares in Fig.~\ref{fig:neldermead})
 does not explicitly reflect time-causal structures.

While the time monotonicity is not shared by all instances of the falsification problems, we find many realistic instances that approximately satisfy the property.
We discuss time monotonicity in~\S{}\ref{sec:theoreticalBoundaryCaseResults}, as well as in the context of our experiments in~\S{}\ref{sec:experiments}.

\paragraph{Falsification with Time Staging}
Our proposal of \emph{time staging} consists of incrementally synthesizing a candidate input signal. %, from one segment to another. 
% kokokara!
We illustrate this in Fig.~\ref{fig:staged}. 
In the first stage (left), we run a falsification algorithm and try to find an initial input segment that achieves low robustness (i.e.\ high peak speed). This first stage comprises running $n_{1}$ samplings, as illustrated in Fig.~\ref{fig:nonstaged}.
 This process will gradually improve candidates for the initial input segment, in the way the  arrows $\uparrow$ on the left in Fig.~\ref{fig:staged} 
  designate. Let us assume that the last candidate \raisebox{0pt}[0pt]{$u_{1}^{(n_{1})}$} is the (tentative) best, achieving the smallest robustness.

In the second stage (on the right in Fig.~\ref{fig:staged}) we continue  \raisebox{0pt}[0pt]{$u_{1}^{(n_{1})}$} and synthesize the second input segment. This is again by running  a falsification algorithm, as depicted. Note that, in each stage (a box in Fig.~\ref{fig:staged}), the whole iterated process in Fig.~\ref{fig:nonstaged} is conducted.
In this way we continue to the $K$-th stage, always starting with the input segment that performed the best in the previous stage, thus exploiting the time-causal structure.

While time staging is not difficult to implement, there is a challenge in using it effectively.
An immediate question is whether choosing the single best input segment in each stage is the optimal approach.
Our current strategy favors exploitation over exploration: it might miss a falsifying signal whose robustness must decrease slowly in the earlier segments and only quickly in the latter segments. Indeed we are working on an evolutionary variant of the above time-staged algorithm, where multiple segments are passed over from one stage to another, in order to maintain diversity and conduct exploration. That said, even under the current simple strategy of picking the best one, we observe significant performance enhancement in some falsification problems. See~\S{}\ref{sec:experiments}. 

We can summarize this trade-off in terms of the size of search spaces.
Let $U$ be the set of  candidates for input segments, and $K$ be the number of stages. Then the size of the set of whole input signals is $|U|^{K}$, choosing one input segment for each stage. In our staged algorithm, in contrast, the search space for each stage is $U$ and overall our search space is $K\cdot |U|$. This reduction comes with the risk of missing some falsifying input signals. The experimental results in~\S{}\ref{sec:experiments} suggest this risk is worth taking.
Moreover, in~\S{}\ref{sec:theoreticalBoundaryCaseResults} we present some theoretical conditions for the absence of such risk.
They help users decide in practical applications when time staging will be effective.

\section{Optimization-Based Falsification }\label{sec:unstagedFalsification}
From this section on we turn to the formal description and analysis of our algorithm.  This section presents a review of  existing works on optimization-based falsification.

\paragraph{System Models}
Let us formalize our system models. 

 \begin{mydefinition}[time-bounded signal]\label{def:operationsOnSignals}
 Let $T\in \Rpos$ be a positive real. 
 A \emph{(time-bounded) $m$-dimensional signal} with a time horizon $T$ is a function $\bw\colon [0,T]\to\Rm$. 

 Let $\bw\colon [0,T]\to \Rm$ and $\bw'\colon [0,T']\to\Rm$ be (time-bounded) signals. Their \emph{concatenation} $\bw\cdot\bw'\colon [0,T+T']\to \Rm$ is defined by
 \begin{math}
  (\bw\cdot\bw')(t):=
 \bw(t)
 \end{math}
 if $t\in [0,T]$, and
   $\bw'(t-T)$ if $t\in(T,T+T']$.

 Let $T_{1},T_{2}\in (0,T]$ such that $T_{1}<T_{2}$. The \emph{restriction} 
 $\bw|_{[T_{1},T_{2}]}\colon [0,T_{2}-T_{1}]\to \Rm$ of  $\bw\colon [0,T]\to \Rm$ to the interval  $[T_{1},T_{2}]$ is defined by $(\bw|_{[T_{1},T_{2}]})(t):=\bw(T_{1}+t)$. 
 \end{mydefinition}

\begin{mydefinition}[system model $\mathcal{M}$]\label{def:systemModel}
 A \emph{system model}, with $M$-dimensional input,  is a function $\mathcal{M}$ that takes 
an input signal $\bu\colon [0,T]\to \R^{M}$ and returns $\mathcal{M}(\bu)\colon [0,T]\to \R^{N}$. Here the common time horizon $T\in \Rpos$ 
is arbitrary. 

Furthermore, we impose the following \emph{causality} condition on $\mathcal{M}$.
For any time-bounded signals $\bu\colon [0,T]\to \R^{M}$ and $\bu'$, we require that
\begin{math}\label{eq:causality}
 \mathcal{M}(\bu\cdot\bu')
\big|_{[0,T]}
= \mathcal{M}(\bu)
\end{math}.
\end{mydefinition}
Note that $
\mathcal{M}(\bu\cdot\bu')
=
\mathcal{M}(\bu)
\cdot
\mathcal{M}(\bu')
$
does not hold in general: feeding $\bu$ can change the internal state of $\mathcal{M}$. This motivates the following definition. 

\begin{mydefinition}[continuation $\mathcal{M}_{\bu}$]\label{def:continuation}
Let $\mathcal{M}$ be a system model and $\bu\colon [0,T]\to \R^{M}$ be a signal. The \emph{continuation} of $\mathcal{M}$ after $\bu$, denoted by $\mathcal{M}_{\bu}$, is defined as follows. For an input signal $\bu'\colon [0,T']\to \R^{M}$:
\begin{math}
 \mathcal{M}_{\bu}(\bu')(t)
 :=
 \mathcal{M}(\bu\cdot\bu')(T+t)
\end{math}.
\end{mydefinition}
\auxproof{ The notions and conditions in Def.~\ref{def:systemModel}--\ref{def:continuation} allow simpler presentation as \emph{Mealy machines}, in case the notion of time is discrete. See Appendix~\ref{appendix:mealymachine}. 
}

\paragraph{Signal Temporal Logic and Robust Semantics}
We review \emph{signal temporal logic} (\STL)~\cite{DBLP:conf/formats/MalerN04} and its \emph{robust semantics}~\cite{DBLP:journals/tcs/FainekosP09,DBLP:conf/formats/DonzeM10}.
 $\Var$ is the set of variables, and let $N:=|\Var|$. 
 Variables stand for physical quantities,
control modes, etc.  $\equiv$ denotes syntactic equality. 

\begin{mydefinition}[syntax]\label{def:syntax}
  In $\STL$, 
 \emph{atomic propositions} and 
 \emph{formulas} are defined as follows, respectively:
\begin{math}
       \alpha 
 \,::\equiv\,
        f(x_1, \dots, x_n) > 0
\end{math}, and 
\begin{math}
      \varphi  \,::\equiv\,
        \alpha \mid \bot
        \mid \neg \varphi 
        \mid \varphi \wedge \varphi 
        \mid \varphi \UntilOp{I} \varphi
\end{math}. Here
 $f$ is an $n$-ary function $f:\R^n \to \R$, $x_1, \dots, x_n \in \Var$,
  and $I$ is a closed non-singular interval in $\Rnn$,
  i.e.\ $I=[a,b]$ or $[a, \infty)$ where $a,b \in \R$ and  $a<b$.
\end{mydefinition}

\auxproof{$\STL$ is  much like LTL: it lacks the next-time operator (because of continuous time); and it allows temporal modalities restricted to intervals $I$.  
}

  We omit subscripts $I$ for temporal operators if $I = [0, \infty)$. Other common connectives like $\lor,\rightarrow,\top$, $\Box_{I}$ (always) and $\Diamond_{I}$ (eventually), are introduced as abbreviations: $\Diamond_{I}\varphi\equiv\top\UntilOp{I}\varphi$ and 
$\Box_{I}\varphi\equiv\lnot\Diamond_{I}\lnot\varphi$.  Atomic formulas like $f(\vec{x})\le c$, where $c\in\R$ is a constant, are also accommodated by using negation and the function $f'(\vec{x}):=f(\vec{x})-c$. 
\begin{mydefinition}[robust semantics~\cite{DBLP:conf/formats/DonzeM10,DBLP:conf/cav/DonzeFM13}]\label{def:semantics}
  For an unbounded $n$-dimensional signal $\bw \colon \Rnn\to \R^{n}$ and $t\in \Rnn$,
 $\bw^t$ denotes the \emph{$t$-shift} of $\bw$, that is, 
 $\bw^t(t') \Defeq \bw(t+t')$.

  Let $\bw \colon \Rnn \to \R^{N}$ be a signal (recall $N=|\Var|$),
  and $\varphi$ be an $\STL$ formula.
  We define the \emph{robustness} 
  $\Robust{\bw}{\varphi} \in \R \cup \{\infty,-\infty\}$ 
as follows, by induction.
  Here $\sqcap$ and $\sqcup$ denote infimums and supremums of real numbers, respectively.
    \begin{displaymath}
      \begin{array}{l}
          \Robust{\bw}{f(x_1, \cdots, x_n) > 0}  \;\Defeq \;
           f\bigl(\bw(0)(x_1), \cdots, \bw(0)(x_n)\bigr) 
\qquad
          \Robust{\bw}{\bot}  \;\Defeq\;  -\infty
\\
          \Robust{\bw}{\neg \varphi}   \;\Defeq\;   - \Robust{\bw}{\varphi}\qquad
          \Robust{\bw}{\varphi_1 \wedge \varphi_2}   \;\Defeq\;   \Robust{\bw}{\varphi_1} \sqcap \Robust{\bw}{\varphi_2}\\
          \Robust{\bw}{\varphi_1 \UntilOp{I} \varphi_2}   \;\Defeq\; 
                                                         \textstyle{ \Vee{t \in I}\bigl(\,\Robust{\bw^t}{\varphi_2} \sqcap 
                                                         \Wedge{t' \in [0, t)} \Robust{\bw^{t'}}{\varphi_1}\,\bigr)}
      \end{array}
    \end{displaymath}
\end{mydefinition}

Here are some intuitions and consequences. 
The robustness $\Robust{\bw}{f(\vec{x})>c}$ stands for the vertical margin $f(\vec{x})-c$ for the signal $\bw$ at time $0$.  A negative robustness value indicates how far the formula is from being true.
The robustness for the eventually modality is computed by
\begin{math}
       \Robust{\bw}{\DiaOp{[a,b]} (x > 0)}
      = \Vee{t \in [a,b]} \bw(t)(x)
\end{math}.

The original semantics of $\STL$ is Boolean, given by a binary relation $\models$ between signals and formulas. The robust semantics refines the Boolean one, in the sense that:
$ \sem{\bw,\varphi} > 0$
 implies
$\bw\models\varphi$, and
$ \sem{\bw,\varphi} < 0$
implies
$\bw\not\models\varphi$. 
  Optimization-based falsification via robust semantics~\cite{DBLP:journals/tcs/FainekosP09} hinges on this refinement. See~\cite{DBLP:conf/formats/DonzeM10}. 

Although the definitions so far are for unbounded signals only,
we note that the robust semantics $\sem{\bw,\varphi}$,
as well as the Boolean satisfaction $\bw\models\varphi$, allows straightforward adaptation to time-bounded signals (Def.~\ref{def:operationsOnSignals}). See Appendix~\ref{appendix:semanticsForTimeBoundedSignals}. 

\auxproof{Take a specification $\DiaOp{[0,5](x>0)}$ as an example; unlike Def.~\ref{def:semantics}, time robustness is sensitive to how quickly the requirement $x>0$ gets satisfied. Our idea of time staging can be applied to all of these different robustness notions; in this paper we focus on the simplest setting of space robustness for the sake of presentation.}

\paragraph{Falsification Solvers}
In the next definition, 
a prototype of a score function~$\rho$ is given by   the robustness $\rho_{\varphi}$
of a given $\STL$ specification
$\varphi$. The generality of allowing other
$\rho$ is needed later in~\S{}\ref{sec:timeStagedFalsification}. 
\begin{equation}\label{eq:scoreByRobustness}
 \rho_{\varphi}(\bv)\;:=\;\sem{\bv,\varphi}
\end{equation} 

\begin{mydefinition}[falsification solver]
\label{def:falsificationSolver}
A \emph{falsification solver} is a stochastic algorithm 
\begin{math}
 \Falsify
\end{math}
that takes, as input: 
1) a system model $\mathcal{M}$ (Def.~\ref{def:systemModel}) with $M$-dimensional input;
2) a score function~$\rho$ that takes an output signal $\bv$ of
        $\mathcal{M}$ and returns a score
        $\rho(\bv)\in\R\cup\{-\infty,\infty\}$; and
3)
a time horizon $T\in\Rpos$.
The algorithm $\Falsify$ returns an $M$-dimensional signal 
$\bu\colon [0,T]\to \R^{M}$. 

 Each invocation $\Falsify(\mathcal{M},
 \rho,
 T)$ of the solver is called a \emph{falsification trial}. It is \emph{successful} if the returned signal $\bu$ satisfies 
\begin{math}
  \rho(\mathcal{M}(\bu)) < 0
 \end{math}. 
 Note that the returned signal~$\bu$ can differ in every trial,
since $\Falsify$ is a stochastic algorithm. 

We further assume the internal structure of the solver $\Falsify$
 follows the scheme in Algorithm~\ref{algorithm:falsify}. It consists
 of two phases. The first \emph{initial sampling} phase collects
 some  candidates for $\bu\colon [0,T]\to\R^{M}$ regardless of the
 system model $\mathcal{M}$ or the score function~$\rho$. In the second
 \emph{optimization sampling} phase, a stochastic optimization algorithm
 is employed to sample a candidate $\bu$ that is likely to make
 the score
% $\sem{\mathcal{M}(\bu),\varphi}$
 $\rho(\mathcal{M}(\bu))$
 small.
\begin{algorithm}[t]
\caption{Internal Structure of a Falsification Solver $\Falsify(\mathcal{M},\rho,T)$}
\label{algorithm:falsify}
\begin{algorithmic}[1]
 \Require a system model $\mathcal{M}$,
 %an $\STL$ formula $\varphi$,
 a score function $\rho$,
 and $T\in\Rpos$
\State $\mathsf{U}\gets ()$  
   \Comment{the list $\mathsf{U}$ collects all the candidates $\bu\colon [0,T]\to \R^{M}$}
 \While {$\lnot\mathsf{InitialSamplingDone}(T,\mathsf{U})$}
 \label{line:initSamplingStart}
 \State $\bu\gets\mathsf{InitialSampling}(T)$
       \Comment{ $\bu\colon [0,T]\to \R^{M}$ is sampled following some recipe}         
 \State $\mathsf{U}\gets \mathsf{cons}(\mathsf{U},\bu)$
  \label{line:initSamplingEnd}
\EndWhile
\While {$\lnot\mathsf{OptimizationSamplingDone}(\mathcal{M},\rho,T,\mathsf{U})$}
  \State $\bu\gets\mathsf{OptimizationSampling}(\mathcal{M},\rho,T,\mathsf{U})$
    \State 
     \Comment{\parbox[t]{.92\linewidth}{
           $\bu$ is sampled, so that 
           %$\sem{\mathcal{M}(\bu),\varphi}$ 
           $\rho(\mathcal{M}(\bu))$
           becomes small,
           based on  previous samples in $\mathsf{U}$}}
 \State $\mathsf{U}\gets \mathsf{cons}(\mathsf{U},\bu)$
\EndWhile
\State $\bu\gets\argmin_{\bu\in\mathsf{U}}
  %\sem{\mathcal{M}(\bu),\varphi}
   \rho(\mathcal{M}(\bu))
$
\State \Return $\bu$
 \Comment{a trial is successful if 
 $
 % \sem{\mathcal{M}(\bu),\varphi}
 \rho(\mathcal{M}(\bu))
 < 0$}
\end{algorithmic}
\end{algorithm}
\end{mydefinition}

\paragraph{Implementation of Falsification Solvers}
Both Breach~\cite{DBLP:conf/cav/Donze10}  and S-Taliro~\cite{DBLP:conf/tacas/AnnpureddyLFS11} take industry-standard Simulink models as system models. For input signal candidates the tools focus on piecewise constant signals; they are represented by sequences $(u_{1},\dotsc,u_{K})$ of real numbers, much like in~\S{}\ref{sec:preview}. Here $K$ is the number of \emph{control points}; in our staged algorithm we use the same $K$ for the number of stages. % This is also much like in~\S{}\ref{sec:preview}.

The tools offer multiple  stochastic optimization algorithms for  the optimization sampling phase, including \emph{CMA-ES}~\cite{DBLP:conf/cec/AugerH05},  \emph{global Nelder-Mead} and \emph{simulated annealing}. 
The initial sampling phase is mostly by random sampling. Additionally, in  Breach with  global Nelder-Mead, 
so-called \emph{corner samples} are  added to the list $\mathsf{U}$. The number of corner samples grows exponentially as $K$ grows, i.e.\ as we have more control points.
\auxproof{
\begin{mydefinition}[signal $\bu_{(u_{1},\dotsc,u_{K})}$]\label{def:piecewiseConstSignal}
 Let $T\in\Rpos$ and $u_{1},\dotsc, u_{K}\in\R^{M}$. A signal $\bu_{(u_{1},\dotsc,u_{K})}\colon [0,T]\to\R^{M}$ is defined as follows. 
 \begin{equation}\label{eq:piecewiseConstSignal}
 \bu_{(u_{1},\dotsc,u_{K})} (t)
\; := \quad
u_{1}\; \text{ (for $t\in [0,\textstyle\frac{T}{K})$)\enspace, }
 \quad\cdots,\quad
u_{K}\; \text{ (for $t\in [\textstyle\frac{(K-1)T}{K},T]$)\enspace. }
 % \begin{cases}
 %  u_{1} &\text { if $t\in [0,\frac{T}{K})$,}
 %  \\
 %  \cdots
 %  \\
 %  u_{K} &\text { if $t\in [\frac{(K-1)T}{K},T]$.}
 % \end{cases}
 \end{equation}
 In the signal $\bu_{(u_{1},\dotsc,u_{K})}$, potential discontinuities are only found at  the time instants $0,\frac{T}{K}, \frac{2T}{K},\dotsc, \frac{(K-1)T}{K}$. These time instants are called \emph{control points}. 
\end{mydefinition}

It is also common to restrict the value domain of each component of input signals to a certain interval (as opposed to the whole set $\R$). That is, for each $i\in[1,M]$, it is required that $\bu(t)(i)$  be in the interval $[a_{i},b_{i}]$ for any $t\in [0,T]$.

The following is used by Breach with  global Nelder-Mead, as we already discussed. 
\begin{mydefinition}[corner sample]\label{def:cornersample}
Let $a_{i}< b_{i}$ be a pair of real numbers, for each $i\in[1,M]$. Let $T\in\Rpos$ be a time horizon, and $K\in\N$ be the number of control points.  A \emph{corner sample} is a signal 
\begin{math}
 \bu_{(u_{1},\dotsc, u_{K})}\colon [0,T]\longrightarrow [a_{1},b_{1}]\times\cdots\times[a_{M},b_{M}]
\end{math}
such that each $u_{j}\in\R^{M}$ satisfies
$ u_{j}(i)= a_{i}$ or $ u_{j}(i)= b_{i}$, for each $i\in [1,M]$.
% \begin{quote}
% $ u_{j}(i)= a_{i}$ or $ u_{j}(i)= b_{i}$, for each $i\in [1,M]$.
% \end{quote}
\end{mydefinition}
Note that the number of corner samples grows exponentially as $K$ or $M$ grows. 
}

\section{Time Staging in Optimization-Based Falsification}
\label{sec:timeStagedFalsification}
\noindent

\begin{mydefinition}[time-staged deployment of falsification solver]\label{def:timeStagedFalsification}
Let $\mathcal{M}$ be a system model,  $\varphi$ be an $\STL$ formula, and $T\in\Rpos$ be a time horizon. Let $K\in\N$ be a parameter; it is the number of
 time stages.
The \emph{time-staged deployment} of a falsification solver $\Falsify$ is the procedure in Algorithm~\ref{algorithm:timeStagedFalsify}. On the line~\ref{line:jthStage}, the model $\mathcal{M}_{\bu}$ is the continuation of $\mathcal{M}$ after $\bu$ (Def.~\ref{def:continuation}); the score function $\partial_{\bv}\rho_{\varphi}$ is defined by
\begin{equation}\label{eq:scoreFunctionDerivative}
 (\partial_{\bv}\rho_{\varphi})(\bv')
 \;:=\;
 \rho_{\varphi}(\bv\cdot\bv')
 \;\stackrel{(\ref{eq:scoreByRobustness})}{=}\;
 \sem{\bv\cdot\bv',\varphi}\enspace.
\end{equation}
The whole procedure is stochastic (since $\Falsify$ is); an invocation is called a \emph{time-staged falsification trial}. It is \emph{successful} if the returned signal $\bu$ satisfies $\sem{\mathcal{M}(\bu),\varphi}<0$. 
\end{mydefinition}

\begin{algorithm}[t]
\caption{Time-Staged Deployment of a Falsification Solver}
\label{algorithm:timeStagedFalsify}
\begin{algorithmic}[1]
\Require a falsification solver $\Falsify$, a system model $\mathcal{M}$, an $\STL$ formula $\varphi$,  $T\in\Rpos$ and $K\in\N$
\State $\bu\gets ()$  
   \Comment{the input prefix obtained so far. We start with the empty signal $()$}
\For {$j \in \{1,\dotsc,K\}$}
\State $\bu'\gets \Falsify(\mathcal{M}_{\bu},\partial_{\mathcal{M}(\bu)}\rho_{\varphi}, \frac{T}{K})$
   \Comment{synthesizing the $j$-th input segment}
\label{line:jthStage}
\State $\bu\gets \bu\cdot \bu'$ 
   \Comment{concatenate $\bu'$, after which the length of $\bu$ is $\frac{jT}{K}$}
\EndFor
\State \Return $\bu$
 \Comment{a time-staged falsification trial is successful if $\sem{\mathcal{M}(\bu),\varphi} < 0$}
\end{algorithmic}
\end{algorithm}

A falsification trial (i.e.\ an invocation of Algorithm~\ref{algorithm:falsify}) is an iterative process: the more we sample, the more likely we obtain a falsifying input signal.  Since
we run multiple falsification trials
 in Algorithm~\ref{algorithm:timeStagedFalsify} (one trial for each of the  $K$ stages), an important question is how we distribute available time to different stages. 

A simple strategy is to fix the number of samples in each phase of Algorithm~\ref{algorithm:falsify}. Then the predicates $\mathsf{InitialSamplingDone}(T,\mathsf{U})$ and $\mathsf{OptimizationSamplingDone}(\mathcal{M},\rho,T,\mathsf{U})$ are given by $|\mathsf{U}|> N_{\mathrm{max}}^{\mathrm{init}}$ and
$|\mathsf{U}|> N_{\mathrm{max}}^{\mathrm{opt}}$,  where $N_{\mathrm{max}}^{\mathrm{init}},N_{\mathrm{max}}^{\mathrm{opt}}$ are constants. 

An \emph{adaptive} strategy, that we also implemented for the
optimization sampling phase, is to continue sampling until we stop seeing
progress. Here we fix a parameter $N_{\mathrm{max}}^{\mathrm{stuck}}
%^{\mathrm{noProgress}}
$, and we stop after $N_{\mathrm{max}}^{\mathrm{stuck}}
%^{\mathrm{noProgress}}
$ consecutive samplings without reducing robustness. 
A similar strategy of adaptively choosing the number of samples can be
introduced for random sampling in the initial sampling phase (the
lines~\ref{line:initSamplingStart}--\ref{line:initSamplingEnd} of
Algorithm~\ref{algorithm:falsify}).
\auxproof{For $\mathsf{U}=(\bu_{1},\dotsc,\bu_{|\mathsf{U}|})$, $ \mathsf{OptimizationSamplingDone}(\mathcal{M},\rho,T,\mathsf{U})$ is defined by
\begin{displaymath}
% \quad:\equiv\quad\bigl(\,
 |\mathsf{U}|\;>\;
 N_{\mathrm{max}}^{\mathrm{stuck}}+
 \textstyle \argmin_{i}\rho(\mathcal{M}(\bu_{i})) 
 %^{\mathrm{noProgress}}
%\,\bigr)
\enspace.
\end{displaymath}
}

\subsection{Towards Efficient Implementation}
\label{subsec:timeStagedFalsificationResultsForImpl}
A key to  speedup of 
 Algorithm~\ref{algorithm:timeStagedFalsify} is in the line~\ref{line:jthStage}; more specifically, how we handle the previous input prefix $\bu$. Here we discuss two directions, one on the model $\mathcal{M}_{\bu}$ and the other on the score function $\partial_{\bv}\rho_{\varphi}$. (We note that  the suggested enhancements are not currently used in our implementation, because of performance reasons. See below.)

\paragraph{Continuation of Models}
 Optimization-based falsification has a very wide application domain. Since it only requires a \emph{black-box} model $\mathcal{M}$, the concrete form of $\mathcal{M}$ can vary from a program
% (with well-defined semantics) 
to a Simulink model
% (potentially with third-party components) 
and even a system with  hardware components (\emph{HILS}). 
These models can be very big, and usually
 the bottleneck in falsification lies in \emph{simulation}, that is, to compute $\mathcal{M}(\bu)$ given an input signal $\bu$. 

In the line~\ref{line:jthStage} of Algorithm~\ref{algorithm:timeStagedFalsify}, therefore, 
using the 
definition $\mathcal{M}(\bu\cdot\bu')(T+t)$ in Def.~\ref{def:continuation}
is in principle not a good strategy: it
requires simulation of $\mathcal{M}$ for the whole prefix $\bu\cdot\bu'$, which can be avoided if we can directly simulate the continuation $\mathcal{M}_{\bu}$. In Simulink this is possible by saving the snapshot of the model after a simulation,
via the \verb+SaveFinalState+ model configuration parameter.
In our implementation we do not do so, though, because  the overhead of saving and loading snapshots is currently greater than the cost of simulating. This balance can become different, if we figure out a less expensive way to use snapshots, or if we study more complex models.

\paragraph{Derivative of Formulas}
The situation is similar with the score function $\partial_{\mathcal{M}(\bu)}\rho_{\varphi}$ in
the line~\ref{line:jthStage} of Algorithm~\ref{algorithm:timeStagedFalsify}. Using the presentation $\rho_{\varphi}(\mathcal{M}(\bu)\cdot\bv')$ in~(\ref{eq:scoreFunctionDerivative}) requires scanning the same prefix $\mathcal{M}(\bu)$ repeatedly. 
Desired here is a \emph{syntactic} presentation of $\partial_{\mathcal{M}(\bu)}\rho_{\varphi}$, that will be given as an $\STL$ formula $\partial_{\mathcal{M}(\bu)}\varphi$ such that
$\partial_{\mathcal{M}(\bu)}\rho_{\varphi}=\rho_{(\partial_{\mathcal{M}(\bu)}\varphi)}$. 
This would allow one to utilize available algorithms for computing robustness values $\sem{\bv,\partial_{\mathcal{M}(\bu)}\varphi}$.

\begin{mydefinition}[derivative of flat $\STL$ formulas]\label{def:derivative}
 Let $T\in\Rpos$, and  $\bv\colon [0,T]\to \R^{N}$ be a signal.
 Given an $\STL$ formula $\varphi$ that is \emph{flat} in the sense
 that it does not have nested temporal operators,
 the \emph{derivative} $\partial_{\bv}\varphi$ by $\bv$ is defined inductively as follows.
     \begin{displaymath}
      \begin{array}{l}
       \partial_{\bv}\bigl(f(\vec{x})>0\bigr) 
       \;:\equiv\;
       \mathsf{c}_{\sem{\bv,f(\vec{x})>0}}
       \qquad
       \qquad
       \partial_{\bv} \bot
       \;:\equiv\;
       \bot
       \\
       \partial_{\bv} (\lnot\varphi)
       \;:\equiv\;
       \lnot \partial_{\bv}\varphi
       \qquad
       \partial_{\bv} (\varphi_{1}\land\varphi_{2})
       \;:\equiv\;
       (\partial_{\bv}\varphi_{1})\land(\partial_{\bv}\varphi_{2})
       \\
       \partial_{\bv} (\varphi_1 \UntilOp{I} \varphi_2)
       \;:\equiv\;
       \mathsf{c}_{\sem{\bv,\varphi_1 \UntilOp{I} \varphi_2}}
       \lor
       \bigl(\,(\mathsf{c}_{\sem{\bv,\Box \varphi_{1}}}\land \varphi_{1})\UntilOp{I-T}{\varphi_{2}}\,\bigr)
      \end{array}
    \end{displaymath}
Here the interval $I-T$ is obtained by shifting endpoints, such as $[a,b]-T=[a-T,b-T]$. For each $r\in\R$, the notation $\mathsf{c}_r$ abbreviates the atomic formula $r>0$, where $r$ is thought of as a constant function.  We use the fact that $\sem{\bw,\mathsf{c}_{r}}=r$.
\end{mydefinition}
 Until formulas $\varphi_1 \UntilOp{I} \varphi_2$
 are split into the evaluation on the signal prefix $\bv$ (first disjunct),
 and a continuation (second disjunct).
 The constant $\mathsf{c}_{\sem{\bv,\Box \varphi_{1}}}$ injects
 the robustness of~$\varphi_1$ seen so far into the residual formula
 (recall that both of~$\Box$ and~$\land$ take the infimum).
It follows that
$\partial_{\bv} \bigl(\Box_I~\varphi\bigr)
    \equiv \mathsf{c}_{\sem{\bv,\Box \varphi}} \land \Box_{I-T}~\varphi$
and
$\partial_{\bv} \bigl(\Diamond~\varphi\bigr)
    \equiv \mathsf{c}_{\sem{\bv,\Diamond \varphi}} \lor \Diamond_{I-T}~\varphi$.

 \begin{myproposition}\label{prop:correctnessOfDerivative}
 Let $T\in\Rpos$,  $\bv\colon [0,T]\to \R^{N}$ be a signal, and $\varphi$ be a flat $\STL$ formula. We have, for any
 $\bv'\colon [0,T']\to \R^{N}$,
 \begin{math}
  \sem{\bv',\partial_{\bv}\varphi}
  =
  \sem{\bv\cdot\bv',\varphi}
 \end{math}.
 \myqed
 \end{myproposition}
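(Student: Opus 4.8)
The plan is to prove the identity $\sem{\bv',\partial_{\bv}\varphi} = \sem{\bv\cdot\bv',\varphi}$ by structural induction on the flat $\STL$ formula $\varphi$, following exactly the case split in Definition~\ref{def:derivative}. The base cases ($f(\vec{x})>0$ and $\bot$) and the Boolean cases ($\lnot$, $\land$) should be routine: for the atomic case, $\partial_{\bv}(f(\vec{x})>0)$ is the constant formula $\mathsf{c}_{\sem{\bv,f(\vec{x})>0}}$, whose robustness on \emph{any} signal $\bv'$ is $\sem{\bv,f(\vec{x})>0} = f(\bv(0)(x_1),\dots)$, and since $(\bv\cdot\bv')(0)=\bv(0)$ by Definition~\ref{def:operationsOnSignals}, this equals $\sem{\bv\cdot\bv',f(\vec{x})>0}$. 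The negation and conjunction cases follow immediately from the inductive hypothesis together with the clauses $\sem{\cdot,\lnot\psi}=-\sem{\cdot,\psi}$ and $\sem{\cdot,\psi_1\land\psi_2}=\sem{\cdot,\psi_1}\sqcap\sem{\cdot,\psi_2}$ in Definition~\ref{def:semantics}, since $\partial_\bv$ commutes with both connectives.

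The crux is the until case $\varphi_1\UntilOp{I}\varphi_2$, and here flatness is essential: because $\varphi_1,\varphi_2$ contain no temporal operators, $\partial_\bv\varphi_1$ and $\partial_\bv\varphi_2$ do not appear, and instead the derivative is stated directly in terms of $\varphi_1,\varphi_2$ themselves. I would start from the semantic clause
\begin{math}
\sem{\bv\cdot\bv',\varphi_1\UntilOp{I}\varphi_2}
= \Vee{t\in I}\bigl(\sem{(\bv\cdot\bv')^t,\varphi_2}\sqcap\Wedge{t'\in[0,t)}\sem{(\bv\cdot\bv')^{t'},\varphi_1}\bigr)
\end{math}
and split the supremum over $t\in I$ according to whether the witness time $t$ lies within the prefix $[0,T]$ or beyond it. The key observation is that for a shift amount $s\geq T$ one has $(\bv\cdot\bv')^{s}=(\bv')^{s-T}$ on the relevant time window (the prefix $\bv$ has been entirely passed), whereas for $s<T$ the shifted signal still reads from $\bv$. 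This dichotomy is exactly what produces the two disjuncts: the part with $t<T$ collapses into the constant $\mathsf{c}_{\sem{\bv,\varphi_1\UntilOp{I}\varphi_2}}$, and the part with $t\geq T$ becomes an until over the shifted interval $I-T$ evaluated on $\bv'$, carrying the conjoined prefix-robustness $\mathsf{c}_{\sem{\bv,\Box\varphi_1}}$ that records the infimum of $\varphi_1$ accumulated over $[0,T]$.

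The main obstacle I anticipate is handling the $\Wedge{t'\in[0,t)}$ term carefully when the witness $t\geq T$: the conjunction over $t'\in[0,t)$ splits as $\Wedge{t'\in[0,T)}\sem{(\bv\cdot\bv')^{t'},\varphi_1}\sqcap\Wedge{t'\in[T,t)}\sem{(\bv\cdot\bv')^{t'},\varphi_1}$, and one must check that the first factor is exactly $\sem{\bv,\Box\varphi_1}$ (using the boundedness of $\bv$ and the adaptation of the semantics to time-bounded signals in Appendix~\ref{appendix:semanticsForTimeBoundedSignals}), while the second factor, after re-indexing $t'\mapsto t'-T$, matches the $\varphi_1$-conjunction in the derivative's second disjunct $(\mathsf{c}_{\sem{\bv,\Box\varphi_1}}\land\varphi_1)\UntilOp{I-T}\varphi_2$. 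The interval bookkeeping for $I-T$ and the edge cases at the boundary $t=T$ (whether the endpoint falls in the prefix or continuation, and the effect of $I$ possibly being $[a,\infty)$) will require the most attention; once the two disjuncts are matched term-by-term and recombined under $\sqcup$, the identity follows. Flatness guarantees this whole argument never needs to recurse through a temporal operator inside $\varphi_1$ or $\varphi_2$, which is precisely why the derivative can be defined without $\partial_\bv$ acting on them.
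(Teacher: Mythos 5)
Your proposal matches the paper's proof (Appendix~\ref{appendix:STLDerivative}) essentially step for step: structural induction with routine atomic and Boolean cases, and in the until case the same splitting of the supremum at the prefix boundary $T$, the same decomposition of $\Wedge{t'\in[0,t)}\sem{(\bv\cdot\bv')^{t'},\varphi_1}$ into the prefix infimum $\sem{\bv,\Box\varphi_1}$ and a reindexed continuation part over $I-T$, and the same two facts where flatness enters ($\sem{\bv,\psi}=\sem{\bv\cdot\bv',\psi}$ for modality-free $\psi$, and $(\bv\cdot\bv')^{t}=\bv^{t}\cdot\bv'$ for $t\le T$), including your correctly flagged harmless overlap at $t=T$. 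The only cosmetic difference is direction: you expand $\sem{\bv\cdot\bv',\varphi_1\UntilOp{I}\varphi_2}$ toward the derivative, whereas the paper's chain of equalities starts from $\sem{\bv',\partial_{\bv}(\varphi_1\UntilOp{I}\varphi_2)}$ --- the same computation read in reverse.
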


A proof is in Appendix~\ref{appendix:STLDerivative}. 
Use of derivatives for timed  specifications is also  found e.g.\ in~\cite{DBLP:conf/tacas/UlusFAM16}. The settings are different, though: Boolean semantics in~\cite{DBLP:conf/tacas/UlusFAM16} while our semantics is quantitative. Our restriction to flat formulas comes mainly from this difference, and lifting the flatness restriction seems hard.

\section{Sufficient Conditions for Time Staging}
\label{sec:theoreticalBoundaryCaseResults}

We present some theoretical analyses of the performance of time staging
that indicate to which class of systems the time-staged approach can apply.
We give some \emph{sufficient} conditions under which the approach is guaranteed
to work. However, it should be noted that it is \emph{not} necessary
that a concrete system satisfies these conditions strictly as these are rather restrictive.
Nevertheless, we believe that users with expert domain knowledge
can judge whether their models satisfy these conditions approximately.
This way our results provide those users with ``rules of thumb.''

As we discussed in the last paragraph of~\S{}\ref{sec:preview}, 
the potential performance advantage by time staging comes from the reduction of
search spaces from $|U|^{K}$ to $K\cdot|U|$. Here $U$ is the set of potential
input segments for each stage, and $K$ is the number of stages. This advantage
comes at the risk of missing out some error input signals. The following basic
condition~\eqref{eq:incrementalFalsification},
that we call \emph{incremental falsification}, ensures that there is no such risk.
More precisely, we can decompose the ``best'' input signal~$\bu$ into a first stage~$\bu_1$
and its remainder~$\bu_2$ such that the entire falsification problem (left hand side)
is solved by greedy optimization of the initial segment (inner $\argmin_{\bu_1}$),
and subsequent optimization of the continuation (outer $\min_{\bu_2}$).
For all choices of $T_1,T_2$ with ranges
$\bu\colon [0,T_1 + T_2]\to \Rm$ and $\bu_i\colon [0,T_i]\to \Rm$:
\begin{equation}\label{eq:incrementalFalsification}
\min_{\bu} \sem{\M(\bu),\varphi}
  \; =\; \min_{\bu_2}\Bigl\llbracket\mathcal{M}\Bigl(\bigl( \argmin_{\bu_{1}}\sem{\mathcal{M}(\bu_{1}),\varphi}\bigr)\cdot\bu_{2}\Bigr),\varphi\Bigr\rrbracket
\end{equation}
Algorithm~\ref{algorithm:timeStagedFalsify} repeatedly
unfolds~\eqref{eq:incrementalFalsification} by picking constant $T_1 = T/K$
where~$T$ is the time horizon and~$K$ is the number of stages.
The rest of this section is devoted to the search for concrete sufficient conditions for~(\ref{eq:incrementalFalsification}). 

\paragraph{Monotone Systems and Ceiling Specifications}
We formalize the time monotonicity property
 in~\S{}\ref{sec:preview}. That it implies  incremental falsification~(\ref{eq:incrementalFalsification}) can be easily proved. 

\begin{mydefinition}[time-monotone falsification problem]\label{def:timeMonotonicity}
 A system model $\mathcal{M}$ and an $\STL$ formula $\varphi$ are said to \emph{constitute a time-monotone falsification problem} if, for any input signals
 $\mathbf{u}_{1}, \mathbf{u}'_{1}\colon [0,T_{1}]\to \Rm$ and $\mathbf{u}_{2}\colon [0,T_{2}]\to \Rm$, 
 \begin{math}
   \sem{\mathcal{M}(\mathbf{u}_{1}),\varphi}
  \le
  \sem{\mathcal{M}(\mathbf{u}'_{1}),\varphi}
 \end{math}
 implies
 \begin{math}
  \sem{\mathcal{M}(\mathbf{u}_{1}\cdot\mathbf{u}_{2}),\varphi}
  \le
  \sem{\mathcal{M}(\mathbf{u}'_{1}\cdot \mathbf{u}_{2}),\varphi}
 \end{math}.
\end{mydefinition}

We investigate yet more concrete conditions that ensures time monotonicity. 
The following condition on system models is assumed
in the example of~\S{}\ref{sec:preview}. 

\noindent
 \begin{mydefinition}[monotone system, ceiling specification]
 \label{def:ceiling-specification}
 Let $x$ be a variable (for output).
 A system model $\mathcal{M}$  is said to be \emph{monotone} in 
 $x$ if, for each $\bu_{1},\bu_{1}'\colon[0,T_{1}]\to \R^{M}$ and $\bu_{2}\colon[0,T_{2}]\to \R^{M}$,
 \begin{math}
  \mathcal{M}(\bu_{1})(T_{1})
  (x)
  \le
 \mathcal{M}(\bu_{1}')(T_{1})
  (x)
 \end{math}
 implies
 \begin{math}
  \mathcal{M}(\bu_{1}\cdot\bu_{2})(T_{1}+T_{2})
  (x)
  \le
 \mathcal{M}(\bu_{1}'\cdot\bu_{2})(T_{1}+T_{2})
  (x)
 \end{math}. 

 An $\STL$ formula of the form  $\Box(x<c)$, where $x$ is a variable and $c\in\R$ is a constant, is called a \emph{ceiling formula}. 
 \end{mydefinition}

One can speculate that  a monotone system and a ceiling specification  $\Box(x<c)$, like those in~\S{}\ref{sec:preview}, constitute a time-monotone falsification problem. The speculation  is  not true, unfortunately; a counterexample is easily constructed using a model $\mathcal{M}$ whose output signal is not increasing. We can instead show the following weaker property. 

\begin{mydefinition}[truncated time monotonicity]\label{def:truncatedTimeMonotonicity}
 A system model $\mathcal{M}$ and an $\STL$ formula $\varphi$  \emph{constitute a truncated time-monotone falsification problem} if, for any input
 $\mathbf{u}_{1}, \mathbf{u}'_{1}\colon [0,T_{1}]\to \Rm$ and $\mathbf{u}_{2}\colon [0,T_{2}]\to \Rm$, 
$  \sem{\mathcal{M}(\mathbf{u}_{1}),\varphi}
  \le
  \sem{\mathcal{M}(\mathbf{u}'_{1}),\varphi}
$ implies existence of $T\in(0,T_{1}]$ such that
$  \sem{\mathcal{M}((\mathbf{u}_{1}|_{[0,T]})\cdot\mathbf{u}_{2}),\varphi}
  \le
  \sem{\mathcal{M}((\mathbf{u}'_{1}|_{[0,T]})\cdot \mathbf{u}_{2}),\varphi}\enspace.
$
\end{mydefinition}

\begin{myproposition}
\label{prop:monotoneAndBox}
 Let $\mathcal{M}$ be a model monotone in $x$, and  $\varphi\equiv
\bigl(\Box(x<c)\bigr)$. Then $\mathcal{M}$ and $\varphi$ constitute a truncated time-monotone falsification problem. \myqed
\end{myproposition}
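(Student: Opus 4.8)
The plan is to reduce everything to the explicit robustness of a ceiling formula and then to locate a single truncation time at which the peak of $\mathcal{M}(\bu_{1})$ is realised. First I would record that for a time-bounded signal $\bw\colon[0,T]\to\R^{N}$ one has $\sem{\bw,\Box(x<c)}=c-\sup_{t\in[0,T]}\bw(t)(x)$, the bounded-horizon instance of $\sem{\bw,\BoxOp{I}\psi}=\Wedge{t\in I}\sem{\bw^{t},\psi}$ specialised to $\psi\equiv(c-x>0)$. Writing $\bv_{1}=\mathcal{M}(\bu_{1})$ and $\bv_{1}'=\mathcal{M}(\bu_{1}')$, the hypothesis $\sem{\bv_{1},\varphi}\le\sem{\bv_{1}',\varphi}$ then reads as $\sup_{t\in[0,T_{1}]}\bv_{1}(t)(x)\ge\sup_{t\in[0,T_{1}]}\bv_{1}'(t)(x)$; call these peaks $P_{1}\ge P_{1}'$.

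Next I would choose the truncation time $T\in(0,T_{1}]$ to be a time at which $\bv_{1}(\cdot)(x)$ attains its maximum $P_{1}$ over $[0,T_{1}]$ (such a maximiser exists by continuity and compactness; its positivity is discussed as the obstacle below). With this choice, causality $\mathcal{M}(\bu\cdot\bu')\big|_{[0,T]}=\mathcal{M}(\bu)$ gives $\mathcal{M}((\bu_{1}|_{[0,T]})\cdot\bu_{2})(s)=\bv_{1}(s)$ for $s\in[0,T]$, so the peak of the concatenated output over $[0,T+T_{2}]$ splits as a maximum of a \emph{prefix part} $\sup_{t\in[0,T]}\bv_{1}(t)(x)=P_{1}$ and a \emph{continuation part} $\sup_{s\in(T,T+T_{2}]}\mathcal{M}((\bu_{1}|_{[0,T]})\cdot\bu_{2})(s)(x)$, and similarly for $\bu_{1}'$. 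The goal becomes showing that the combined peak under $\bu_{1}$ dominates that under $\bu_{1}'$, which I would do by comparing the two parts separately.

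The prefix parts are immediate: the $\bu_{1}$-prefix contributes exactly $P_{1}$, while the $\bu_{1}'$-prefix contributes $\sup_{t\in[0,T]}\bv_{1}'(t)(x)\le P_{1}'\le P_{1}$. The continuation parts are the crux and are where monotonicity enters. Since $T$ maximises $\bv_{1}$, we have $\bv_{1}(T)(x)=P_{1}\ge P_{1}'\ge\bv_{1}'(T)(x)$, i.e.\ the two truncated prefixes have $x$-endpoints ordered as $\bv_{1}'(T)(x)\le\bv_{1}(T)(x)$. I would feed this into Definition~\ref{def:ceiling-specification} not once but at every continuation instant: for each $t'\in(0,T_{2}]$, causality rewrites the continuation value as an endpoint, $\mathcal{M}((\bu_{1}|_{[0,T]})\cdot\bu_{2})(T+t')(x)=\mathcal{M}((\bu_{1}|_{[0,T]})\cdot(\bu_{2}|_{[0,t']}))(T+t')(x)$, and monotonicity applied to the ordered endpoints with the common second segment $\bu_{2}|_{[0,t']}$ yields $\mathcal{M}((\bu_{1}'|_{[0,T]})\cdot\bu_{2})(T+t')(x)\le\mathcal{M}((\bu_{1}|_{[0,T]})\cdot\bu_{2})(T+t')(x)$. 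Taking the supremum over $t'$ bounds the $\bu_{1}'$-continuation part by the $\bu_{1}$-continuation part. Combining the two parts by taking maxima gives the combined-peak inequality, and translating back through $\sem{\cdot,\varphi}=c-(\text{peak})$ yields exactly $\sem{\mathcal{M}((\bu_{1}|_{[0,T]})\cdot\bu_{2}),\varphi}\le\sem{\mathcal{M}((\bu_{1}'|_{[0,T]})\cdot\bu_{2}),\varphi}$.

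The main obstacle is precisely the step that upgrades monotonicity from endpoints to suprema, together with guaranteeing $T>0$. Definition~\ref{def:ceiling-specification} compares only \emph{endpoint} values, whereas the ceiling robustness is governed by a \emph{supremum}; the pointwise reapplication of monotonicity along $\bu_{2}|_{[0,t']}$, using causality to view each continuation value as an endpoint, is what bridges this gap, and it is essential that $T$ be a maximiser of $\bv_{1}$ so that the endpoint ordering $\bv_{1}'(T)(x)\le\bv_{1}(T)(x)$ points the right way (this is also why the untruncated claim fails: the peak of $\bv_{1}$ need not sit at $t=T_{1}$). The positivity of $T$ needs a short continuity argument when $\bv_{1}$ peaks only at $t=0$: if the peak value strictly exceeds $\bv_{1}'(0)(x)$ then continuity supplies a small positive $T$ with $\bv_{1}(T)(x)\ge\bv_{1}'(T)(x)$ and $\sup_{t\in[0,T]}\bv_{1}(t)(x)=P_{1}$, so the argument goes through; only fully degenerate equal-peak-at-origin configurations require separate care, which is consistent with the paper's remark that these sufficient conditions are deliberately restrictive.
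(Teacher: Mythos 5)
Your proof is correct and takes essentially the same route as the paper's: both choose the truncation time $T$ as the instant where $\mathcal{M}(\bu_{1})(\cdot)(x)$ attains its peak over $[0,T_{1}]$, split the robustness of the concatenated output at $T$ via causality, compare the prefix parts directly from the hypothesis, and bridge the endpoint-vs-supremum gap by applying the monotonicity of Def.~\ref{def:ceiling-specification} pointwise at each continuation instant with the truncated common segment $\bu_{2}|_{[0,t']}$ (the paper's step $(\dagger)$). Your final paragraph in fact addresses a detail the paper's proof silently elides---its extremum argument only yields $T\in[0,T_{1}]$, whereas Def.~\ref{def:truncatedTimeMonotonicity} requires $T>0$---so your continuity patch is a welcome refinement rather than a deviation.
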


The proof, in Appendix~\ref{appendix:proofOfmonotoneAndBox}, 
 constructs a concrete choice of $T$ in Def.~\ref{def:truncatedTimeMonotonicity}. Specifically it is the instant $T\in[0,T_{1}]$ in which the robustness $\sem{\mathcal{M}(\mathbf{u}_{1}|_{[0,T]}),\varphi}$ is minimum. In the scenario of~\S{}\ref{sec:preview} this is the instant that the vehicle speed is in its peak. 
Note that truncated time monotonicity does not guarantee incremental
falsification as per~\eqref{eq:incrementalFalsification},
but it implies that the current rigid time staging at
$0, \frac{T}{K}, \frac{2T}{K},\dotsc, \frac{(K-1)T}{K}$ is not optimal.
These theoretical considerations suggest potential improvement of the staged
procedure in Def.~\ref{def:timeStagedFalsification} with adaptive choice of
stages, which is left for future work.

\paragraph{Stateless Systems and Reachability Specifications}
Here is another sufficient condition. 

\begin{mydefinition}[stateless system, reachability formula]\label{def:statelessSystem}
  A  system model $\mathcal{M}$ is said to be \emph{stateless} if, 
for any input signals
 $\mathbf{u}_{1},\bu_{1}'\colon [0,T_{1}]\to \Rm$ and $\mathbf{u}_{2}\colon [0,T_{2}]\to \Rm$, we have $\mathcal{M}(\bu_{1}\cdot\bu_{2})|_{(T_{1},T_{2}]}=
\mathcal{M}(\bu'_{1}\cdot\bu_{2})|_{(T_{1},T_{2}]}
$.

An $\STL$ formula  $\Diamond\psi
$, where $\psi$ is modality-free,
 is called a \emph{reachability formula}. 
\end{mydefinition}

Note that being stateless is a sufficient but not necessary condition for 
$
\mathcal{M}(\bu_{1}\cdot\bu_{2})
=
\mathcal{M}(\bu_{1})\cdot
\mathcal{M}(\bu_{2})
$. Statelessness requires insensitivity to previous input prefixes, but a stateless system can still be sensitive to time. 

\begin{myproposition}
\label{prop:statelessAndReachability}
 Let $\mathcal{M}$ be a stateless system and $\varphi$ be a reachability specification  $\varphi\equiv(\Diamond\psi)$. Then $\mathcal{M}$ and $\varphi$ satisfy the incremental falsification property~(\ref{eq:incrementalFalsification}). \myqed
\end{myproposition}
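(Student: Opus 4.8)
The plan is to reduce the semantics of the reachability formula $\Diamond\psi$ to a pointwise supremum over time, then use causality and statelessness to decompose the robustness of a concatenated input as a $\max$ of a ``prefix part'' and a ``continuation part,'' and finally observe that monotonicity of $\max$ lets greedy minimization of the prefix commute through. First I would unfold the robust semantics. Since $\psi$ is modality-free, its robustness at a shifted signal $\bw^{t}$ depends only on the instantaneous value $\bw(t)$; writing $g(y):=\sem{\bw_{y},\psi}$ for any signal with $\bw_{y}(0)=y$, we get $\sem{\bw^{t},\psi}=g(\bw(t))$. Because $\sem{\bw,\top}=\infty$, the Until-definition of $\Diamond$ collapses, giving $\sem{\bw,\Diamond\psi}=\sup_{t} g(\bw(t))$ over the time domain of $\bw$.

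Next, for $\bu=\bu_{1}\cdot\bu_{2}$ with $\bu_{1}\colon[0,T_{1}]\to\Rm$ and $\bu_{2}\colon[0,T_{2}]\to\Rm$, I would split this supremum at $T_{1}$:
\[
\sem{\mathcal{M}(\bu_{1}\cdot\bu_{2}),\Diamond\psi}
= \max\Bigl(\,\sup_{t\in[0,T_{1}]} g\bigl(\mathcal{M}(\bu_{1}\cdot\bu_{2})(t)\bigr),\;
\sup_{t\in(T_{1},T_{1}+T_{2}]} g\bigl(\mathcal{M}(\bu_{1}\cdot\bu_{2})(t)\bigr)\Bigr).
\]
By the causality condition of Def.~\ref{def:systemModel}, $\mathcal{M}(\bu_{1}\cdot\bu_{2})|_{[0,T_{1}]}=\mathcal{M}(\bu_{1})$, so the first term equals $\sem{\mathcal{M}(\bu_{1}),\Diamond\psi}$. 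By statelessness (Def.~\ref{def:statelessSystem}) the output over the second segment is independent of $\bu_{1}$, so the second term is a quantity $R(\bu_{2})$ depending only on $\bu_{2}$. This yields the key decomposition
\[
\sem{\mathcal{M}(\bu_{1}\cdot\bu_{2}),\Diamond\psi}
= \max\bigl(\sem{\mathcal{M}(\bu_{1}),\Diamond\psi},\,R(\bu_{2})\bigr).
\]

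Finally I would substitute this into both sides of~\eqref{eq:incrementalFalsification}. Since $\max(a,b)$ is nondecreasing in $a$, for each fixed $\bu_{2}$ we have $\min_{\bu_{1}}\max\bigl(\sem{\mathcal{M}(\bu_{1}),\Diamond\psi},R(\bu_{2})\bigr)=\max\bigl(r^{*},R(\bu_{2})\bigr)$, where $r^{*}:=\min_{\bu_{1}}\sem{\mathcal{M}(\bu_{1}),\Diamond\psi}$ is attained at $\bu_{1}^{*}=\argmin_{\bu_{1}}\sem{\mathcal{M}(\bu_{1}),\Diamond\psi}$. Writing $\min_{\bu}=\min_{\bu_{1}}\min_{\bu_{2}}$ and taking $\min_{\bu_{2}}$, the left-hand side of~\eqref{eq:incrementalFalsification} becomes $\min_{\bu_{2}}\max(r^{*},R(\bu_{2}))$, which is exactly the right-hand side $\min_{\bu_{2}}\sem{\mathcal{M}(\bu_{1}^{*}\cdot\bu_{2}),\Diamond\psi}$, closing the argument.

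The step requiring the most care is the monotonicity argument, i.e.\ commuting $\min_{\bu_{1}}$ through the $\max$: this is precisely where statelessness earns its keep, since it guarantees $R(\bu_{2})$ is a constant with respect to $\bu_{1}$, making greedy selection of the prefix lossless. A secondary point is attainment of the extrema; I would either assume minimizers exist (as the $\argmin$ notation presupposes) or restate everything with $\inf/\sup$, relying on the identity $\inf_{\bu_{1}}\max\bigl(f(\bu_{1}),c\bigr)=\max\bigl(\inf_{\bu_{1}}f(\bu_{1}),c\bigr)$, which holds regardless. The boundary instant $t=T_{1}$ may be assigned to either piece without affecting the suprema.
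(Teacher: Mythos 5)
Your proof is correct, and it supplies exactly what the paper omits: the authors state Prop.~\ref{prop:statelessAndReachability} with only the remark that ``a proof is easy,'' so there is no written argument to match line by line. Your route is the natural quantitative one, and it coincides in spirit with what the authors evidently intend: since $\psi$ is modality-free, its robustness depends only on the instantaneous signal value, so $\sem{\bw,\Diamond\psi}=\sup_{t}g(\bw(t))$ for a fixed $g\colon\R^{N}\to\R\cup\{\pm\infty\}$; splitting the supremum at $T_{1}$, causality (Def.~\ref{def:systemModel}) identifies the first piece with $\sem{\M(\bu_{1}),\varphi}$, statelessness (Def.~\ref{def:statelessSystem}) makes the second piece a quantity $R(\bu_{2})$ constant in $\bu_{1}$, and the interchange $\inf_{\bu_{1}}\max\bigl(f(\bu_{1}),c\bigr)=\max\bigl(\inf_{\bu_{1}}f(\bu_{1}),c\bigr)$ then yields~\eqref{eq:incrementalFalsification}. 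Compared with the paper's informal motivation (falsifying $\Diamond\psi$ means keeping the output in $\lnot\psi$, and statelessness makes backtracking across stages unnecessary --- a Boolean, discretized reading), your derivation is a strict refinement: it works directly with robustness values, is valid for arbitrary modality-free $\psi$ including $\pm\infty$ robustness, and you correctly flag the attainment issue that the $\argmin$ in~\eqref{eq:incrementalFalsification} silently presupposes. One small correction to your closing remark: the boundary instant $t=T_{1}$ must be kept in the \emph{prefix} piece (or in both pieces). If it were assigned only to the tail, the tail supremum would include $g\bigl(\M(\bu_{1})(T_{1})\bigr)$ (by causality) and would no longer be a function of $\bu_{2}$ alone; the \emph{value} of the overall supremum is indeed unaffected, but the decomposition into a $\bu_{1}$-part and a $\bu_{2}$-part --- the crux of your argument --- would break. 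The split $[0,T_{1}]$, $(T_{1},T_{1}+T_{2}]$ that you actually use in the displayed formula is the right one, and it is precisely the split appearing in Def.~\ref{def:statelessSystem}.
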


\vspace{.3em}\noindent
A proof is easy. 
A typical situation in which we would appeal to Prop.~\ref{prop:statelessAndReachability} is when: the specification is $\Diamond(x<c_{1}\lor x>c_{2})$ (which can be hard to falsify if $c_{1}<c_{2}$ are close); and the system is already in its stable state (so that its behavior does not depend much on what happened during the transient phase). Our experiments in~\S{}\ref{sec:experiments} demonstrate that time staging can drastically improve performance in such settings. 

\auxproof{
To falsify a reachability specification $\Diamond
\psi
% (x<c_{1}\lor x>c_{2})
$ is to maintain the  output value so that it satisfies the (non-temporal) specification $\lnot\psi$. A typical example is $\psi\equiv(x<c_{1}\lor x>c_{2})$; to falsify it is to keep the value of $x$  in the interval $[c_{1},c_{2}]$. This can be very difficult if $c_{1}$ and $c_{2}$ are close to each other.

Let us now consider dividing $[0,T]$ into $K$ segments of the same size. Let $\bu_{j}\colon [0,\frac{T}{K}]\to\R^{M}$ be an input segment, for each $j\in[1,K]$. 
\auxproof{We have the following.
\begin{align*}
& \sem{\mathcal{M}(\bu_{1}\cdot\cdots\cdot \bu_{K}),\Diamond(x<c_{1}\lor x>c_{2})}
\\& =
 \sem{\mathcal{M}(\bu_{1})\cdot (\mathcal{M}(\bu_{1}\cdot\bu_{2})|_{[\frac{T}{K},\frac{2T}{K}]})\cdots\cdot (\mathcal{M}(\bu_{1}\cdots\bu_{K})|_{[\frac{(K-1)T}{K},T]}),\,\Diamond(x<c_{1}\lor x>c_{2})}
\\& =
 \bigsqcup_{j\in[1,K]}
 \bigsqcup_{t\in[0,\frac{T}{K}]} 
\sem{
\bigl(\,\mathcal{M}(\bu_{1}\cdots\bu_{j})|_{[\frac{(j-1)T}{K},\frac{jT}{K}]}\,\bigr)^{t}
,x<c_{1}\lor x>c_{2}}
\\& =
 \bigsqcup_{j\in[1,K]}
 \bigsqcup_{t\in[0,\frac{T}{K}]} 
\bigl(\,
c_{1}- (\mathcal{M}(\bu_{1}\cdots\bu_{j})|_{[\frac{(j-1)T}{K},\frac{jT}{K}]})(t)
\,\bigr)
\sqcup
\bigl(\,
 (\mathcal{M}(\bu_{1}\cdots\bu_{j})|_{[\frac{(j-1)T}{K},\frac{jT}{K}]})(t)-c_{2}
\,\bigr)
\end{align*}
Therefore, 
}
That $\bu_{1}\cdot\cdots\cdot \bu_{K}$ is a falsifying input signal means  the following: 
\begin{equation}\label{eq:12082259}
% c_{1} \;<\; 
(\mathcal{M}(\bu_{1}\cdots\bu_{j})|_{[\frac{(j-1)T}{K},\frac{jT}{K})})(t) \in\R^{N}
% \;<\; c_{2}
\text{satisfies $\lnot\psi$, for each $j\in[1,K]$ and $t\in[0,\textstyle\frac{T}{K}]$.}
\end{equation}
 Let us assume that we consider only piecewise constant signals with control points $0, \frac{T}{K}, \frac{2T}{K},\dotsc, \frac{(K-1)T}{K}$. 
We further quantize input values, and say that for each stage, $u_{j}$ is chosen from a finite set $U=\{r_{1},\dotsc,r_{|U|}\}$ of real numbers.

In this setting,  the current falsification problem translates to the search problem for $(u_{1},\dotsc,u_{K})\in U^{K}$ that satisfies~(\ref{eq:12082259}). The search space is of size $|U|^{K}$. 

We shall now show how to utilize the statelessness assumption on $\mathcal{M}$ to accelerate search. This goes  in the same time-staged way as in Algorithm~\ref{algorithm:timeStagedFalsify}. 
\begin{itemize}
 \item In the first stage we pick $u_{1}$  so that the corresponding signal segment $\bu_{1}$ makes
\begin{math}
 (\mathcal{M}(\bu_{1}))(t) 
% \;<\; c_{2}
\end{math}
satisfy $\lnot\psi$, for each $t\in[0,\frac{T}{K}]$. There are $|U|$ candidates for such $u_{1}$. 
 \item In the second stage 
we pick $u_{2}\in U$  so that $\bu_{2}$ makes
\begin{math}
 (\mathcal{M}(\bu_{1}\bu_{2}))|_{(\frac{T}{K},\frac{2T}{K}]}(t) 
% \;<\; c_{2}
\end{math}
satisfy $\lnot\psi$, for each $t\in[0,\frac{T}{K}]$. The number of candidates is again $|U|$. 

A point here is that, thanks to statelessness, there is no need of backtracking to the first stage. Indeed, assume that desired $u_{2}$ exists under $u'_{1}$ that is different from $u_{1}$ that we picked before. Then \begin{math}
 (\mathcal{M}(\bu'_{1}\bu_{2}))|_{(\frac{T}{K},\frac{2T}{K}]}(t) 
% \;<\; c_{2}
\end{math}
satisfy $\lnot\psi$, for each $t\in[0,\frac{T}{K}]$. However statelessness of $\mathcal{M}$ asserts
\begin{math}
 (\mathcal{M}(\bu'_{1}\bu_{2}))|_{(\frac{T}{K},\frac{2T}{K}]}
=
(\mathcal{M}(\bu_{1}\bu_{2}))|_{(\frac{T}{K},\frac{2T}{K}]}
\end{math}. Therefore search for $u_{2}$ under $u_{1}$ is as promising as under other prefixes $u'_{1}$. 
 \item We continue until the $K$-th stage. In each stage we search for $u_{j}$ from $|U|$ candidates, and there is no need for backtracking. 
\end{itemize}
Therefore, in this time-staged search, we only have to check $K\cdot|U|$ candidates. This is better than an exhaustive search in the whole input space of size $|U|^{K}$. 

The arguments so far are high-level, and they would
 apply to real  problems only approximately. Indeed, few system
models are stateless (Def.~\ref{def:statelessSystem}); and an
optimization-based falsification solver like Breach usually searches in
the input space $U^{K}$  more efficiently than exhaustive
search. Nevertheless, our experiment results
in~\S{}\ref{sec:experiments} show that time staging improves performance
in those settings which are close to the idealized ones that we just
discussed.
}

\paragraph{Corner Samples for Global Nelder-Mead}
The reduction of search spaces from $|U|^{K}$ to $K\cdot |U|$ has its  analogue 
 in the number of corner
samples in  Breach with
global Nelder-Mead (lines~\ref{line:initSamplingStart}--\ref{line:initSamplingEnd} of
Algorithm~\ref{algorithm:falsify}, see the last paragraph of  \S{}\ref{sec:unstagedFalsification}). Originally the
number of corner samples is $2^{K\cdot M}$, where $K$ is the number of control points and 
 $M$ is the number of input values.
By introducing $K$ time stages, the total number of
corner samples is reduced to $K\cdot 2^{M}$.

\section{Experiments}
\label{sec:experiments}
\newcommand{\yes}{20*}
\newcommand{\no}{0*}

% define some signal names as abbreviations
\newcommand{\throttle}{\mathit{throttle}}
\newcommand{\brake}{\mathit{brake}}
\newcommand{\speed}{v}
\newcommand{\rpm}{\omega}
\newcommand{\gear}{g}
\newcommand{\AF}{\mathit{AF}}
\newcommand{\AFref}{\mathit{AF}_\text{ref}}

We compare the success rate and time consumption of the proposed method.
The benchmarks here use automotive Simulink models that are commonly used in the falsification literature. Specifications are chosen taking the deliberations of 
\S{}\ref{sec:theoreticalBoundaryCaseResults} into account,
namely with ceiling specifications
(\Def{ceiling-specification}, including the example of \S{}\ref{sec:preview}),
a reachability specification (\Def{statelessSystem})
and a combination thereof.

The base line is Algorithm~\ref{algorithm:falsify} implemented by Breach~\cite{DBLP:conf/cav/Donze10}.
The methods proposed in~\S{}\ref{sec:timeStagedFalsification}
are implemented on top of Breach:  the time-staged
Algorithm~\ref{algorithm:timeStagedFalsify}~(TS),
and the adaptive strategy (A-TS, the one described after Def.~\ref{def:timeStagedFalsification}).
All three algorithms (plain, TS, A-TS) are combined with different optimization solvers: CMA-ES, simulated annealing (SA), global Nelder-Mead (GNM),
obtaining a total of nine configurations.

The results in \Table{at}
indicate that both success rate and runtime performance
are significantly improved by time staging,
often finding counterexamples when non-staged Breach fails
(e.g.\ columns \emph{S3 hard} and \emph{S init}).
Furthermore, we see that while the adaptive algorithm (A-TS) does not necessarily
lead to a higher success rate in comparison to the time-staged one (TS),
it gives yet another runtime performance improvement.
However, as discussed in detail in \S{}\ref{sec:discussion},
there is no overall best algorithm,
and time staging affects the optimization algorithms differently
depending on the problem.

\begin{table}[t]
\setlength\tabcolsep{3pt}
\centering
\caption{
Experimental results. Each column shows how many falsification trials succeeded (out of 20), and the average runtime.
         %for the Automatic Transmission:
        %
         S1:~$\Box_{[0,30]}~(\speed < 120)$.
         S2:~$\Box_{[0,30]}~(\gear = 3 \to \speed \ge 30)$.
         S3:~$\Diamond_{[10,30]}~(\speed \le v_\text{min} \lor \speed \ge v_\text{max})$, where:
          $v_\text{min} = 50$, $v_\text{max} = 60$ (easy);
          $v_\text{min} = 53$, $v_\text{max} = 57$ (hard).
         S4:~$\Box_{[0,10]}(v < \speed) \lor \Diamond_{[0,30]}(\rpm > \omega_\text{max})$, where:
          $v_\text{min} = 80, \omega_\text{max} = 4500$ (easy);
          $v_\text{min} = 50, \omega_\text{max} = 2700$ (mid); 
          $v_\text{min} = 50, \omega_\text{max} = 2520$ (hard). 
%
%         Experimental results for the AFC requirement
         The specification S for the Abstract Fuel Control model is
         $\neg(\DiaOp{[t_1,t_2]}{\BoxOp{[0,t']}{(\AF - \AFref > \delta * 14.7))}}$, where:
          $t_1 = 0$, $t_2 = 6$, $t' = 3$, $\delta = 0.07$ (init);
          $t_1 = 6$, $t_2 = 26$, $t' = 4$, $\delta = 0.01$ (stable). %\newline
         Starred numbers~0* or~20* indicate that GNM is deterministic
         so all trials yield the same result.}
\label{tab:at}
\scalebox{0.87}{
\begin{tabular}{|r||rr|rr||rr|rr||rr|rr|rr||rr|rr|}
\hline
       model
       & \multicolumn{14}{c||}{Automatic Transmission}
       & \multicolumn{4}{c|}{Abst.\ Fuel Ctrl.}
       \\
\hline
       spec.\
       & \multicolumn{2}{c|}{S1}
       & \multicolumn{2}{c||}{S2}
       & \multicolumn{2}{c|}{S3 easy}
       & \multicolumn{2}{c||}{S3 hard}
       & \multicolumn{2}{c|}{S4 easy}
       & \multicolumn{2}{c|}{S4 mid}
       & \multicolumn{2}{c||}{S4 hard}
       & \multicolumn{2}{c|}{S init}
       & \multicolumn{2}{c|}{S stable}
       \\
\hline
algorithm
  & time & \#/20
  & time & \#/20
  & time & \#/20
  & time & \#/20
  & time & \#/20
  & time & \#/20
  & time & \#/20
  & time & \#/20
  & time & \#/20 \\
\hline
CMA-ES & 27s & \textbf{20}   &  \textbf{5s} & \textbf{20}
       & 39s & 14 & 57s &  0
       & 32s & 16 & 37s &  9 & 59s &  0
       & 49s &  0 & 82s &  1
       \\
   +TS & 52s & 15   & 15s & 16
       & \textbf{9s} & 19 & 23s & \textbf{11}
       & 15s & 14 & \textbf{14s} & \textbf{14} & 24s &  3
       & 30s &0    &  42s & 1
       \\
 +A-TS & 41s & 18   & 15s & 17
       & \textbf{9s} & 16 & 21s & \textbf{10}
       & 26s & 14 & 22s & 14 & \textbf{20s} &  \textbf{5}
       & 26s& 0    & 41s & 0
       \\
\hline
    SA & 50s & 5    & 43s &  7
       & 37s &  9 & 55s &  0
       & 35s &  6 & 36s &  9 & 47s &  \textbf{5}
       & 51s & 0   & 76s & 2
       \\
   +TS & 37s & \textbf{20}   & 33s & \textbf{16}
       & \textbf{11s} & \textbf{19} & 33s &  \textbf{8}
       & 21s & 14 & 25s & 13 & 51s &  0
       & 47s & \textbf{1}   & 54s & \textbf{7}
       \\
 +A-TS & 34s & \textbf{20}   & 18s & \textbf{17}
       &  \textbf{9s} & \textbf{18} & 26s &  \textbf{4}
       & \textbf{16s} & \textbf{18} & 21s & 11 & 30s &  2
       & 34s & 0   & 42s & \textbf{5}
       \\
\hline      
   GNM &\textbf{6s} & \textbf{\yes} & 61s & \no
       & 56s & \no & 55s & \no
       & 43s & \no & 46s & \no & 53s & \no
       & 50s & \no & 86s & \no
       \\
   +TS & 42s & \yes & 15s & \textbf{\yes}
       & \textbf{13s} & \textbf{\yes} & 25s & \textbf{\yes}
       & \textbf{11s} & \textbf{\yes} & 45s & \no & 52s & \no
       & 30s &\textbf{\yes}  & \textbf{20s} &\textbf{\yes}
       \\
 +A-TS & 20s & \yes & 16s & \textbf{\yes}
       & \textbf{10s} & \textbf{\yes} & 26s & \textbf{\yes}
       & 13s & \yes & 45s & \no & 43s & \no
       & 37s & \no & \textbf{19s} & \textbf{\yes}
       \\
\hline
\end{tabular}
}
\end{table}

\paragraph{Benchmarks}
 \emph{Automatic Transmission} is a Simulink model that
was proposed as a benchmark for falsification
in~\cite{DBLP:conf/cpsweek/HoxhaAF14}.
It has input values~$\throttle \in [0,100]$ and~$\brake \in [0,325]$,
and outputs the car's speed~$\speed$, the engine rotation~$\rpm$,
and the selected gear~$\gear$.

With this model we consider five specifications \textbf{S1--5}. The first two are ceiling ones.
Specification \textbf{S1} $\Box_{[0,30]}~(\speed < 120)$
(cf.~the example in \S{}\ref{sec:preview}) states
the speed be always below a threshold.
This property is easily falsified with $\throttle = 100$.
Specification \textbf{S2} $\Box_{[0,30]}~(\gear = 3 \to \speed \ge 30)$
states that it is not possible to drive slowly in a high gear.
A falsifying trajectory first has to speed up to reach this gear
and subsequently roll out until speed falls below the threshold.
This latter part of the trajectory can again be seen as a ceiling specification.
Note that this property is interesting because the robustnes does not provide
any guidance unless gear~3 has been entered by the system.

Specification \textbf{S3} is a reachability problem,
$\Diamond_{[10,30]}~(\speed \le v_\text{min} \lor \speed \ge v_\text{max})$,
that encodes the search for a trajectory that keeps the speed between a lower and upper bound.
The falsification problem consists of two sub-challenges:
1)~hitting this speed interval precisely after an initial acceleration up to 10s simulated time; and
2)~maintaining a correct speed till the time horizon.
This suggests that a natural decomposition of the problem can indeed be achieved
by separating these two aspects in time.
% We evaluate with 
% $v_\text{min} = 50$, $v_\text{max} = 60$ as well as
% $v_\text{min} = 53$, $v_\text{max} = 57$.

Specification \textbf{S4} $\Box_{[0,10]}(v_\text{min} < \speed) \lor \Diamond_{[0,30]}(\rpm > \omega_\text{max})$
expresses that speed~$v_\text{min}$ can only be reached with an engine rotation exceeding a threshold~$\omega_\text{max}$.
This specification is mentioned in~\cite{DBLP:conf/cpsweek/HoxhaAF14} and evaluated in
e.g., \cite{DBLP:conf/cav/AkazakiH15,DBLP:conf/cav/AdimoolamDDKJ17}, too.
To falsify, a trajectory must be found that reaches speed~$v$
early with an engine rotation lower than~$\omega_\text{max}$.
The difficulty increases with  higher~$v_\text{min}$ and lower~$\omega_\text{max}$, respectively.
The formula represents the mixture of ceiling and reachability specifications.
% We evaluate with three settings
% $v = 80, \omega = 4500$, 
% $v = 50, \omega = 2700$, and
% $v = 50, \omega = 2520$.

\smallskip

\noindent
The second system model is \emph{Abstract Fuel Control} from~\cite{DBLP:conf/hybrid/JinDKUB14}.
It takes two input values,
\emph{pedal angle} (from $[0,61.1]$) and \emph{engine speed} (from $[0,1100]$); it outputs 
%a critical signal called
\emph{air-fuel ratio} $\AF$, which influences fuel efficiency and performance of the car.
The value of $\AF$ is expected to be close to a reference value $\AFref$.
% The pedal angle varies in the range $[0,61.1]$
% and the engine speed vary in the range [0,1100].
According to~\cite{DBLP:conf/hybrid/JinDKUB14},
this setting corresponds to the so-called \emph{normal mode}, where $\AFref = 14.7$ is constant.

We used the specification $\neg(\DiaOp{[t_1,t_2]}{\BoxOp{[0,t']}{(\AF - \AFref > \delta * 14.7))}}$: the air-fuel ratio does not deviate from an acceptable range 
%around $\AFref= 14.7$
for more than $t'$ seconds.
 We evaluated this specification with two parameter sets:
the initial period with a larger
error margin, and the stable period with a smaller margin. See Table~\ref{tab:at} for parameter values.
% The setting of $\delta$ in different periods embodies that a more convergent $\AF$ ratio is expected in the stable period.
%In the experiment for initial period, we set the total time $T = 9$ and 3 control points, and correspondingly set $t_1 = 0$, $t_2 = 6$, $t' = 3$. In the experiment for stable period, we set the total time $T = 30$ and 5 control points, and correspondingly set $t_1 = 6$, $t_2 = 26$, $t' = 4$. 

\paragraph{Experimental Setup and Results}
For the experiments with the Automatic Transmission model,
the input signals were piecewise constant with 5 control points.
% recall that $\throttle \in [0,100]$ and~$\brake \in [0,325]$.
The  time horizon was $T = 30$.
The parameters outlined in~\S{}\ref{sec:timeStagedFalsification} were as follows:
the maximum number of samplings for each plain (non-staged)
falsification trial
 was
$%N^\mathrm{opt}_{\max} = 
150$ (initial and optimization samplings combined).
In the time-staged (TS) trials, we make the number of stages coincide with
that of control points. Analogously, the sampling budget per stage was
set to $
%N^\mathrm{opt}_{\max} = 
30$ for $K=5$~stages, resulting in overall 150~samplings.
The adaptive algorithm (A-TS) ran with the threshold $N^\mathrm{stall}_{\max} = 30/2 = 15$
%simulations 
per each of five stages.
The experiments with the Abstract Fuel Control model were run up to the time horizon
$T = t_2 + t'$ where $t_{2}$ and $t'$ are as in Table~\ref{tab:at}. 
%indicated by the respective parameters of the specification.
We used three and five stages, respectively, for the initial and stable specifications. These again coincide with the number of control
points. The TS algorithms conducted 30 samplings in each stage.

The experiments ran Breach version 1.2.9 and MATLAB~R2017b on an Amazon EC2 c4.8xlarge instance with a 36 core Intel(R) Xeon(R) CPU (2.90GHz) and 58G of main memory.
However, we did not use the opportunity to parallelize,
and the time reported is in the same order of magnitude
as that of a modern desktop workstation.

The results are shown in \Table{at}.
They are grouped by the underlying stochastic optimization algorithm:
CMA-ES, simulated annealing (SA) and global Nelder-Mead (GNM).
In each group, we compare
plain (unstaged) Breach to the  time-staged  (TS) and the adaptive time-staged (A-TS) ones.
We compare average runtimes (lower is better)
and the success rate (higher is better),
aggregated over 20 falsification trials for each configuration.
Those good results which deserve attention are highlighted in bold.
Note that the implementation of the global Nelder-Mead algorithm in Breach 
uses a deterministic source of quasi-randomness (Halton sequences),
which implies that whether GNM finds a counterexample
is consistent across all trials (marked with an asterisk $*$).

\paragraph{Discussion}
\label{sec:discussion}
Focusing on the Automatic Transmission model first,
we see that CMA-ES works well for S1,
although GNM performs even better
(6s, supposedly because it uses corner samples, see~\S{}\ref{sec:unstagedFalsification}).
Time staging introduces overhead to CMA-ES and GNM,
because each stage is optimized individually.
In contrast, simulated annealing~(SA) benefits from time staging
for the two ceiling specifications S1--2.
We presume that since~SA emphasizes exploration,
it benefits from  exploitation added by time staging
(cf.~\S{}\ref{sec:preview}).

The second specification S2 is slightly more complex:
before gear~3 is reached, there is no guidance from the robustness semantics,
because $\sem{\place,\gear = 3} = -\infty$ masks any quantitative information on $\speed$.
Hence, falsifying this property needs some luck during the collection of
initial samples in Algorithm~\ref{algorithm:falsify}.
CMA-ES  apparently exploits this, see top of column S2 of \Table{at}.
Considering the other algorithms, SA and GNM, both benefit from time staging:
exploitation of time causality prevents these good trajectory prefixes
from being discarded accidentally once the required gear is reached
(cf.~\Fig{nonstaged}).

The results for S3 are evaluated
with two different choices of parameters.
The harder instance was falsified by the time-staged algorithms only,
which can likely be attributed to the flattening of the search space
from size~$|U|^K$ to $K \cdot |U|$
(\S{}\ref{sec:theoreticalBoundaryCaseResults}).
S4 is evidently harder than the previous ones.
Time staging improves performance in a general tendency
but not in all cases (SA for S4 hard).

The results for the Abstract Fuel Control model (the last two columns in \Table{at})
show that the time-staged algorithms boost the ability to falsify some 
rare events.
The specification for the initial stage where $\AF$ is still unstable (S init) can be considered a rare event
since all the three non-staged algorithms failed to falsify it.
Time-staged SA and time-staged GNM managed to find error inputs.
In the last column (stable period) is remarkable, too,
where success rate and run time of SA and GNM significantly improved.

Overall, while the performance of the non-staged algorithms suffers from tightening the
bounds, the time-staged versions are able to find falsifying trajectories with
good success rates while at the same time exhibiting significantly shorter runtimes.

\section{Related Work}
\label{sec:related}

Falsification is a special case of search-based testing,
so considerable research efforts have been made towards
\emph{coverage}~\cite{DBLP:conf/cav/AdimoolamDDKJ17,DBLP:conf/atva/DeshmukhJKM15,DBLP:conf/formats/KuratkoR14}.
The benefits of coverage in falsification guarantees are twofold.
Firstly, they indicate confidence for correctness in case no counterexamples are found.
Paired with sound robustness estimates for simulations, one can cover a an infinite
parameter space with finitely many simulations.
C2CE~\cite{DBLP:conf/tacas/DuggiralaMVP15} is a recent tool that computes
approximations of reachable states using such an approach.
Secondly, coverage can be utilized for a better balance between  \emph{exploration} and \emph{exploitation}:
 stochastic optimization algorithms can be called in an
interleaved manner, in which coverage guides further exploration.
The approach based on Rapidly-Exploring Random Trees \cite{DBLP:conf/nfm/DreossiDDKJD15}
puts an emphasis on exploration by achieving high coverage of the state space.
In their algorithm, robustness-guided hill-climbing optimization plays a supplementary role.
Compared to these works, our current results go in an orthogonal direction,
by utilizing time causality to enhance exploitation.
The so-called multiple-shooting approach to falsification~\cite{DBLP:conf/emsoft/ZutshiDSK14}
can be seen of a generalization of RRTs.
It consists of: an upper layer that searches for an
abstract error trace given by a succession of cells; and a lower layer
where an abstract error trace is concretized to an actual error trace by picking
points from cells. The approach can discover falsifying traces by backwards
search from a goal region, but needs to concatenate partial traces with
potential gaps, which can fail.
Furthermore it is unclear how to extend it to general STL specifications.
A~survey of simulation based approaches has been done by
Kapinski et al.~\cite{kapinski2016simulation}.

Monotonicity has been exploited in different ways for falsification.
Robust Neighborhood Descent~\cite{abbas2014functional,abbas2015test} (RED)
searches for trajectories incrementally, restarting the search from points of
low robustness. Descent computation of RED assumes explicit derivatives of the
dynamics to guarantee convergence to (local) minima.
It is the same principle underlying Prop.~\ref{prop:monotoneAndBox})
and our experiments indicate that this principle is useful for black-box
optimization, too.
In~\cite{abbas2015test}, RED is paired with simulated annealing to combine local
and global search and to account for more exploration. Doing so for our present
work remains to be done in the future.
In~\cite{DBLP:journals/sttt/HoxhaDF18}, Hoxha et al. mine parameters~$\theta$ under which
specifications~$\phi[\theta]$ are satisfied or falsified by the system.
They show that the robust semantics of formulas
is monotone in~$\theta$ and use that fact to tighten such parameters.
This is orthogonal to this work as it does not use monotonicity of the system itself.
Kim et al.~\cite{DBLP:conf/hybrid/KimAS16} use an idea
similar to~\cite{DBLP:journals/sttt/HoxhaDF18}
to partition specifications into upper bounds and lower ceilings.
However, instead of robustness-guided optimization they use exhaustive
exploration of the input space in a way that in turn requires that the system
dynamics is monotone in the choice of each input at each time point.
This is different from our Def~\ref{def:timeMonotonicity} of time-monotonicity
that aims at incrementally composing good partial choices.

The recent work~\cite{DBLP:conf/nfm/DreossiDS17} introduces a compositional
falsification framework, focusing on those systems which include
machine-learning (ML) components that perform tasks such as image recognition.
While the current work aims at the orthogonal direction of finding rare
counterexamples, we are interested in its combination with the results
in~\cite{DBLP:conf/nfm/DreossiDS17}, given the increasingly important roles of
ML algorithms in CPS.

\section{Conclusions and Future Work}
\label{sec:conclusion}
We have introduced and evaluated the idea of time staging to enhance
falsification for hybrid systems. The proposed method emphasizes exploitation
over exploration as part of stochastic optimization.
As there is no single algorithm that fits every problem
(as a consequence of having no free lunch~\cite{DBLP:journals/tec/DolpertM97}),
having a variety of methods at disposal permits the user of a system
to choose the one suitable for the problem at hand.
We have shown that the proposed approach is a good fit for problems that suitable exhibit
time-causal structures, where it significantly outperforms non-staged algorithms.

Two obvious directions for future work have been pointed out already.
Instead of just picking the best trajectory for each stage,
it might be beneficial to retain a few, potentially diverse ones
in the spirit of evolutionary algorithms (\S{}\ref{sec:preview}).
For example, it would be interesting to explore the space between this work
on one hand and coverage-driven rapidly-exploring random trees.

Another idea is to discover time stages adaptively
(\S{}\ref{sec:theoreticalBoundaryCaseResults},
the discussion after Prop.~\ref{prop:monotoneAndBox}).
For the experiments presented here, we chose to set uniformly fixed stages,
which runs the risk of either being too coarse grained (missing some
falsifying input), or being too fine grained (wasting analysis time).

Finally, another future direction is to explore variations of robust semantics
to mitigate discrete propositions like $\gear = 3$ (\S{}\ref{sec:discussion}),
for example using averaging modalities~\cite{DBLP:conf/cav/AkazakiH15}.
Other t-norms than min/max for the semantics of conjunction/disjunction
could preserve more information from different subformulas.

\paragraph{Acknowledgement.}
This work is supported by ERATO HASUO Metamathematics for Systems Design Project
(No. JPMJER1603), Japan Science and Technology Agency.

\bibliographystyle{eptcs}
\bibliography{references.bib}

\newpage
\appendix

\section{$\STL$ Semantics for Time-Bounded Signals}
\label{appendix:semanticsForTimeBoundedSignals}
\begin{mydefinition}[robust semantics for time-bounded signals]\label{def:semanticsTimeBounded} 
  Let $T\in \Rpos$,  $\bw \colon [0,T] \to \R^{N}$ be a time-bounded signal,
  and $\varphi$ be an $\STL$ formula.
  We define the \emph{robustness} 
  $\Robust{\bw}{\varphi}^{T} \in \R \cup \{\infty,-\infty\}$ of $\bw$ with respect to $\varphi$
as follows, by induction. Here the superscript $T$ is an annotation that designates the time horizon.
    \begin{displaymath}
      \begin{array}{l}
          \Robust{\bw}{f(x_1, \dotsc, x_n) > 0}^{T}  \;\Defeq \;
           f\bigl(\bw(0)(x_1), \dotsc, \bw(0)(x_n)\bigr) 
\qquad
          \Robust{\bw}{\bot}^{T}  \;\Defeq\;  -\infty
\\
          \Robust{\bw}{\neg \varphi}^{T}   \;\Defeq\;   - \Robust{\bw}{\varphi}^{T}\qquad
          \Robust{\bw}{\varphi_1 \wedge \varphi_2}^{T}   \;\Defeq\;   \Robust{\bw}{\varphi_1}^{T} \sqcap \Robust{\bw}{\varphi_2}^{T}\\
          \Robust{\bw}{\varphi_1 \UntilOp{I} \varphi_2}^{T}   \;\Defeq\; 
                                                         \displaystyle{ \Vee{t \in I\cap [0,T]}\bigl(\,\Robust{\bw^t}{\varphi_2}^{T-t} \sqcap 
                                                         \Wedge{t' \in [0, t)} \Robust{\bw^{t'}}{\varphi_1}^{T-t'}\,\bigr)}
      \end{array}
    \end{displaymath}
\end{mydefinition}
The Boolean semantics $\models$, found e.g.\ in~\cite{DBLP:conf/formats/DonzeM10}, allows a similar adaptation to time-bounded signals, too.

\auxproof{
\section{Mealy Machines}\label{appendix:mealymachine}
 The notions and conditions in Def.~\ref{def:systemModel}--\ref{def:continuation} have simpler presentation as \emph{Mealy machines}, in case the notion of time is discrete. Specifically, a system model is a function $\mathcal{M}_{0}\colon S\times \R^{M}\to S\times \R^{N}$ where $S$ is the set of internal states. The function $\mathcal{M}_{0}$, together with a choice $s_{0}\in S$ of an initial state,  induces a function $\mathcal{M}_{0}^{s_{0}}\colon (\R^{M})^{*}\to (\R^{N})^{*}$ as follows: given $u_{1}\dotsc u_{n}\in(\R^{M})^{*}$,  let
 \begin{math}
  s_{i} := \pi_{1}\bigl(\mathcal{M}_{0}(s_{i-1},u_{i})\bigr)
 \end{math}
  and
 \begin{math}
  v_{i} := \pi_{2}\bigl(\mathcal{M}_{0}(s_{i-1},u_{i})\bigr)
 \end{math} for each $i\in[1,n]$.  We then define $  \mathcal{M}_{0}^{s_{0}}(u_{1}\dotsc u_{n}):=v_{1}\dotsc v_{n}$. This function $\mathcal{M}_{0}^{s_{0}}$ is a discrete-time counterpart of $\mathcal{M}$ in Def.~\ref{def:systemModel}. 
 
 For this function $\mathcal{M}_{0}^{s_{0}}$ a causality property like~(\ref{eq:causality}) holds: 
\begin{displaymath}
 \mathcal{M}_{0}^{s_{0}}(u_{1}\dotsc u_{n}u_{n+1}\dotsc u_{n+k})\big|_{[1,n]} = 
  \mathcal{M}_{0}^{s_{0}}(u_{1}\dotsc u_{n}) = v_{1}\dotsc v_{n}\enspace.
\end{displaymath} The continuation $(\mathcal{M}_{0}^{s_{0}})_{u_{1}\dotsc u_{n}}$ is also explicitly presented by $\mathcal{M}_{0}^{s_{n}}$, that is, running $\mathcal{M}_{0}$ with $s_{n}$ as the initial state.
}

\section{Brzozowski Derivative of Flat $\STL$ Formulas}\label{appendix:STLDerivative}
In the time-staged falsification procedure we often encounter the following situation: an $\STL$ formula $\varphi$ and a signal $\bv\colon [0,T]\to \R^{N}$  are fixed; and we have to compute robustness $\sem{\bv\cdot\bv', \varphi}$ for a number of different signals $\bv'\colon [0,T']\to \R^{N}$. To aid such computation, a natural idea is to use a syntactic construct $\partial_{\bv}\varphi$ of \emph{(Brzozowski) derivative}. It should be compatible with robust semantics in the sense that
 \begin{math}
  \sem{\bv\cdot\bv',\varphi}
   =
  \sem{\bv',\partial_{\bv}\varphi}
 \end{math}, reducing the computation of the LHS to that of the RHS. 

Similar use of derivatives is found e.g.\ in~\cite{DBLP:conf/tacas/UlusFAM16}. The settings are different, though: Boolean semantics is used in~\cite{DBLP:conf/tacas/UlusFAM16} while we use quantitative robust semantics. In fact, the definition of derivatives in this section focuses on \emph{flat} formulas (i.e.\ free from nested modalities). This restriction is mandated by the quantitative semantics, as our proof later suggests. Anyway, the definitions and results in this section are new to the best of our knowledge.

We need the following extension of $\STL$ syntax.
\begin{mydefinition}[extended $\STL$]\label{def:extendedSTLAppendix}
 We extend the syntax of $\STL$ (Def.~\ref{def:syntax}) by atomic propositions $\mathsf{c}_{r}$ for each $r\in \R$. The robust semantics in Def.~\ref{def:semantics} (and that in Def.~\ref{def:semanticsTimeBounded} in Appendix~\ref{appendix:semanticsForTimeBoundedSignals}) is extended accordingly: $\sem{\bw,\mathsf{c}_{r}}:=r$. 
\end{mydefinition}
Intuitively $\mathsf{c}_{r}$ is an atomic proposition that  constantly returns the robustness value $r$. 

\begin{mydefinition}[derivative]\label{def:derivativeAppendix}
 Let $T\in\Rpos$, and  $\bv\colon [0,T]\to \R^{N}$ be a time-bounded signal. For each extended $\STL$ formula $\varphi$, we define its \emph{derivative} $\partial_{\bv}\varphi$ by $\bv$ by the following induction. 
     \begin{displaymath}
      \begin{array}{l}
       \partial_{\bv}\bigl(f(\vec{x})>0\bigr) 
       \;:\equiv\;
       \mathsf{c}_{\sem{\bv,f(\vec{x})>0}}
       \qquad
       \partial_{\bv} \mathsf{c}_{r} 
       \;:\equiv\;
       \mathsf{c}_{r}
       \qquad
       \partial_{\bv} \bot
       \;:\equiv\;
       \bot
       \\
       \partial_{\bv} (\lnot\varphi)
       \;:\equiv\;
       \lnot \partial_{\bv}\varphi
       \qquad
       \partial_{\bv} (\varphi_{1}\land\varphi_{2})
       \;:\equiv\;
       (\partial_{\bv}\varphi_{1})\land(\partial_{\bv}\varphi_{2})
       \\
       \partial_{\bv} (\varphi_1 \UntilOp{I} \varphi_2)
       \;:\equiv\;
       \mathsf{c}_{\sem{\bv,\varphi_1 \UntilOp{I} \varphi_2}}
       \lor
       \bigl(\,(\mathsf{c}_{\sem{\bv,\Box \varphi_{1}}}\land \varphi_{1})\UntilOp{I-T}{\varphi_{2}}\,\bigr)
      \end{array}
    \end{displaymath}
 Here the interval $I-T$ is  obtained from $I$ by shifting both of its endpoints earlier by $T$, such as $[a,b]-T=[a-T,b-T]$.
\end{mydefinition}

\begin{mydefinition}[flat $\STL$ formula]\label{def:flatSTLFormulaAppendix}
An $\STL$ formula $\varphi$ is \emph{flat} if it does not have nested temporal modal operators. This means: if $\varphi_{1}\UntilOp{I}\varphi_{2}$ is a subformula of $\varphi$,  then neither $\varphi_{1}$ nor $\varphi_{2}$ contains $\UntilOp{}$. 
\end{mydefinition}
\begin{myproposition}\label{prop:correctnessOfDerivativeAppendix}
 Let $T\in\Rpos$,  $\bv\colon [0,T]\to \R^{N}$ be a signal, and $\varphi$ be a flat $\STL$ formula. We have, for each $T'\in\Rpos$ and $\bv'\colon [0,T']\to \R^{N}$,
 \begin{math}
  \sem{\bv',\partial_{\bv}\varphi}
  =
  \sem{\bv\cdot\bv',\varphi}
 \end{math}.
\end{myproposition}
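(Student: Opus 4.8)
The plan is to argue by structural induction on the flat formula $\varphi$, following the clauses of Def.~\ref{def:derivativeAppendix}. The base cases are immediate: for $\mathsf{c}_r$ and $\bot$ both sides equal the constant $r$, respectively $-\infty$, and for an atomic proposition $f(\vec x)>0$ one uses $(\bv\cdot\bv')(0)=\bv(0)$ (Def.~\ref{def:operationsOnSignals}), so that $\sem{\bv\cdot\bv',f(\vec x)>0}=f(\bv(0)(\vec x))=\sem{\bv,f(\vec x)>0}=\sem{\bv',\mathsf{c}_{\sem{\bv,f(\vec x)>0}}}$. The Boolean cases $\lnot\psi$ and $\psi_1\land\psi_2$ follow directly from the induction hypothesis together with the facts that the robust semantics sends $\lnot$ to negation and $\land$ to $\sqcap$; here one uses that subformulas of a flat formula are again flat, so the hypothesis applies. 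Before treating the until case I would isolate an auxiliary \emph{locality lemma}: for a modality-free formula $\psi$ and any signal $\bw$, the value $\sem{\bw,\psi}$ depends only on $\bw(0)$. This is a trivial side induction, and it is the one place where flatness is genuinely exploited.

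The heart of the argument is $\varphi\equiv\varphi_1\UntilOp{I}\varphi_2$ with $\varphi_1,\varphi_2$ modality-free (by flatness, Def.~\ref{def:flatSTLFormulaAppendix}). On the left, the semantics of $\lor$ is $\sqcup$, so
\[
\sem{\bv',\partial_{\bv}\varphi}
=\sem{\bv,\varphi_1\UntilOp{I}\varphi_2}\;\sqcup\;
\sem{\bv',(\mathsf{c}_{\sem{\bv,\Box\varphi_1}}\land\varphi_1)\UntilOp{I-T}\varphi_2},
\]
the first summand being the injected constant. Unfolding the second summand by the until clause of Def.~\ref{def:semanticsTimeBounded} and pulling the constant $\sem{\bv,\Box\varphi_1}$ out of the inner infimum (it does not depend on the shift) rewrites it as $\Vee{s\in(I-T)\cap[0,T']}\bigl(\sem{(\bv')^s,\varphi_2}\sqcap\sem{\bv,\Box\varphi_1}\sqcap\Wedge{s'\in[0,s)}\sem{(\bv')^{s'},\varphi_1}\bigr)$. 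On the right, I would expand $\sem{\bv\cdot\bv',\varphi_1\UntilOp{I}\varphi_2}$ by the same clause over the horizon $T+T'$ and split the outer supremum domain $I\cap[0,T+T']$ at $t=T$ into $I\cap[0,T]$ and $I\cap(T,T+T']$, using that suprema distribute over a union of index sets.

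For the prefix part $t\in I\cap[0,T]$ the locality lemma gives $\sem{(\bv\cdot\bv')^t,\varphi_2}=\sem{\bv^t,\varphi_2}$ and $\sem{(\bv\cdot\bv')^{t'},\varphi_1}=\sem{\bv^{t'},\varphi_1}$ for $t'<t\le T$, so this part collapses exactly to $\sem{\bv,\varphi_1\UntilOp{I}\varphi_2}$, matching the first summand on the left. For the suffix part I substitute $t=T+s$, split the inner infimum range $[0,T+s)$ at $T$, and again invoke locality: the $[0,T)$ portion yields $\Wedge{t'\in[0,T)}\sem{\bv^{t'},\varphi_1}=\sem{\bv,\Box\varphi_1}$, the $[T,T+s)$ portion yields $\Wedge{s'\in[0,s)}\sem{(\bv')^{s'},\varphi_1}$, and $\sem{(\bv\cdot\bv')^{T+s},\varphi_2}=\sem{(\bv')^s,\varphi_2}$. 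This reproduces the second summand, and taking the supremum of the two parts closes the induction.

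The main obstacle is exactly the role of flatness, concentrated in the locality lemma. Locality is what licenses the identification $\sem{(\bv\cdot\bv')^{T+s},\varphi_2}=\sem{(\bv')^s,\varphi_2}$ and the clean splitting of the inner infimum at the join point $t'=T$; if $\varphi_1$ or $\varphi_2$ contained a temporal modality, its evaluation near the boundary would look across the $\bv/\bv'$ seam and the decomposition would break, which is why lifting flatness seems hard. A secondary nuisance is the bookkeeping of the single instant $t=T$ and of the half-open versus closed ranges (the empty range at $s=0$, and $[0,T)$ versus $[0,T]$ in the $\Box\varphi_1$ infimum) dictated by Def.~\ref{def:semanticsTimeBounded}: one must check that the injected constant $\sem{\bv,\Box\varphi_1}$ coincides with the infimum of $\sem{\bv^{t'},\varphi_1}$ over the prefix, so that nothing is double-counted or dropped where the two domain pieces meet.
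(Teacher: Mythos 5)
Your proof is correct and takes essentially the same route as the paper's: structural induction where the until case is handled by splitting the outer supremum at the seam $t=T$, substituting $t=T+s$, and invoking exactly the two facts the paper uses at its step $(*)$ — locality of modality-free formulas (the paper states it inline as $\sem{\bv,\psi}=\sem{\bv\cdot\bv',\psi}$, you package it as a lemma) and the shift-concatenation identity $\bv^{t}\cdot\bv'=(\bv\cdot\bv')^{t}$. The only difference is direction — you expand $\sem{\bv\cdot\bv',\varphi}$ and decompose it into the two disjuncts of $\partial_{\bv}\varphi$, while the paper expands $\sem{\bv',\partial_{\bv}\varphi}$ and reassembles — and your explicit flagging of the closed-versus-half-open bookkeeping at the join point is, if anything, more attentive than the paper's own treatment of that step.
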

\begin{proof}
 By induction on the construction of $\varphi$. Most equalities below follow from the definition of $\partial$ and that of $\sem{\place}$. 
 \begin{align*}
  \sem{\bv',\partial_{\bv}\bigl(f(\vec{x})>0\bigr) }
  &
  %\stackrel{\text{def.\ of $\partial$}}{=}
  =
  \sem{\bv',     \mathsf{c}_{\sem{\bv,f(\vec{x})>0}}}
  = \sem{\bv,f(\vec{x})>0} 
  %\stackrel{\text{def.\ of $\sem{\place}$}}{=}
  =
  \sem{\bv\cdot\bv',f(\vec{x})>0}
   %\enspace.
  \\
  \sem{\bv',\partial_{\bv}\mathsf{c}_{r} }
  &
  %\stackrel{\text{def.\ of $\partial$}}{=}
  =
  \sem{\bv', \mathsf{c}_{r}}   
  %\stackrel{\text{def.\ of $\sem{\place}$}}{=}
  = r   
  %\stackrel{\text{def.\ of $\sem{\place}$}}{=} 
  =\sem{\bv\cdot \bv', \mathsf{c}_{r}}
  \\
  \sem{\bv',\partial_{\bv}\bot}
  &=
  \sem{\bv',\bot} = -\infty = \sem{\bv\cdot\bv',\bot}
  \\
  \sem{\bv',\partial_{\bv}(\lnot \varphi)}
  &=
  \sem{\bv',\lnot\partial_{\bv}\varphi} 
  =
  - \sem{\bv',\partial_{\bv}\varphi}
  \stackrel{\text{I.H.}}{=}
  - \sem{\bv\cdot\bv',\varphi}
  =
  \sem{\bv\cdot\bv',\lnot\varphi}
  \\
  \sem{\bv',\partial_{\bv}( \varphi_{1}\land\varphi_{2})}
  &=
  \sem{\bv',(\partial_{\bv} \varphi_{1})\land(\partial_{\bv}\varphi_{2})}
  =
  \sem{\bv',\partial_{\bv} \varphi_{1}}
  \sqcap
  \sem{\bv',\partial_{\bv}\varphi_{2}}
  \\
  &
  \stackrel{\text{I.H.}}{=}
  \sem{\bv\cdot\bv', \varphi_{1}}
  \sqcap
  \sem{\bv\cdot\bv',\varphi_{2}}
  =
  \sem{\bv\cdot\bv', \varphi_{1}\land \varphi_{2}}
 \end{align*}
  Here is a nontrivial case. 
   \begin{align*}
  &\sem{\bv',\partial_{\bv} (\varphi_1 \UntilOp{I} \varphi_2)}
  \\
  &=
  \sem{\bv',
       \mathsf{c}_{\sem{\bv,\varphi_1 \UntilOp{I} \varphi_2}}}
       \sqcup
  \sem{\bv',
       \bigl(\,(\mathsf{c}_{\sem{\bv,\Box \varphi_{1}}}\land \varphi_{1})\UntilOp{I-T}{\varphi_{2}}\,\bigr)
}
  \\
  &=
  \sem{\bv,\varphi_1 \UntilOp{I} \varphi_2}
  \sqcup
  \displaystyle{ 
  \Vee{t \in (I-T)\cap [0,T']}
  \bigl(\,\Robust{\bv'^t}{\varphi_2}
  \sqcap \sem{\bv,\Box \varphi_{1}}
  \sqcap
       \Wedge{t' \in [0, t)} \Robust{\bv'^{t'}}{\varphi_1}\,\bigr)}
  \\
  &=
  %\sem{\bv,\varphi_1 \UntilOp{I} \varphi_2}
  \Vee{t\in I\cap [0,T]}\bigl(\sem{\bv^{t},\varphi_{2}}\sqcap \Wedge{t'\in [0,t)} \sem{\bv^{t'},\varphi_{1}}\bigr)
  \\
  &\qquad\sqcup
  \displaystyle{ 
  \Vee{t \in (I-T)\cap [0,T']}
  \bigl(\,\Robust{\bv'^t}{\varphi_2}
  \sqcap 
  %\sem{\bv,\Box \varphi_{1}}
  \bigl(\,\Wedge{t'\in [0,T]}\sem{\bv^{t'},\varphi_{1}}\,\bigr)
  \sqcap
  % \Wedge{t' \in [0, t)} \Robust{\bv'^{t'}}{\varphi_1}\,\bigr)}
 \Wedge{t' \in [T, T+t)} \Robust{(\bv\cdot\bv')^{t'}}{\varphi_1}\,\bigr)}
  \\
  &\stackrel{(*)}{=}
  %\sem{\bv,\varphi_1 \UntilOp{I} \varphi_2}
  \Vee{t\in I\cap [0,T]}\bigl(\sem{(\bv\cdot \bv')^{t},\varphi_{2}}\sqcap \Wedge{t'\in [0,t)} \sem{(\bv\cdot \bv')^{t'},\varphi_{1}}\bigr)
  \\
  &\qquad\sqcup
  \displaystyle{ 
  \Vee{t'' \in I\cap [T,T+T']}
  \bigl(\,\Robust{(\bv\cdot\bv')^{t''}}{\varphi_2}
  \sqcap 
  %\sem{\bv,\Box \varphi_{1}}
  \bigl(\,\Wedge{t'\in [0,T]}\sem{(\bv\cdot\bv')^{t'},\varphi_{1}}\,\bigr)
  \sqcap
  % \Wedge{t' \in [0, t)} \Robust{\bv'^{t'}}{\varphi_1}\,\bigr)}
 \Wedge{t' \in [T, t'')} \Robust{(\bv\cdot\bv')^{t'}}{\varphi_1}\,\bigr)}
  \\
  &=
  %\sem{\bv,\varphi_1 \UntilOp{I} \varphi_2}
  \Vee{t\in I\cap [0,T]}\bigl(\sem{(\bv\cdot \bv')^{t},\varphi_{2}}\sqcap \Wedge{t'\in [0,t)} \sem{(\bv\cdot \bv')^{t'},\varphi_{1}}\bigr)
  \\
  &\qquad\sqcup
  \displaystyle{ 
  \Vee{t'' \in I\cap [T,T+T']}
  \bigl(\,\Robust{(\bv\cdot\bv')^{t''}}{\varphi_2}
  \sqcap 
  %\sem{\bv,\Box \varphi_{1}}
  \Wedge{t'\in [0,t'')}\sem{(\bv\cdot\bv')^{t'},\varphi_{1}}
\,\bigr)}
  \\
  &=
   \Vee{t\in I\cap [0,T+T'']}\bigl(\sem{(\bv\cdot \bv')^{t},\varphi_{2}}\sqcap \Wedge{t'\in [0,t)} \sem{(\bv\cdot \bv')^{t'},\varphi_{1}}\bigr)
  \\
  &=
  \sem{\bv\cdot\bv', \varphi_{1}\UntilOp{I}\varphi_{2}}
 \end{align*}
 In $(*)$ we used the following facts. Firstly, for a formula $\psi$ without temporal operators, we have $\sem{\bv,\psi}=\sem{\bv\cdot\bv',\psi}$.   Secondly, if $\bv$'s domain is $[0,T]$ and $t\in [0,T]$, then $\bv^{t}\cdot \bv'=(\bv\cdot\bv')^{t}$.   \myqed
\end{proof}
Note that the flatness assumption on $\varphi$ is crucially used in the proof step. Modifying Def.~\ref{def:derivative} in order to accommodate nested modalities seems hard, after analyzing the proof step $(*)$.

\section{Omitted Proofs}
\subsection{Proof of Prop.~\ref{prop:monotoneAndBox}}\label{appendix:proofOfmonotoneAndBox}
\begin{proof}
By the definitions we have, for each input signal $\bu\colon[0,T]\to\R^{M}$,
\begin{align*}
  \sem{\mathcal{M}(\bu),\Box(x<c)}
  &=
  \Wedge{t\in[0,T]}c-
  % f\bigl(
  %  \mathcal{M}(\bu)(t)(x_{1})
  %  ,\dotsc,
  %  \mathcal{M}(\bu)(t)(x_{n})
%\bigr)
\mathcal{M}(\bu)(t)(x)
\enspace.
\end{align*}
Therefore the assumption 
\begin{math}
   \sem{\mathcal{M}(\mathbf{u}_{1}),\varphi}
  \le
  \sem{\mathcal{M}(\mathbf{u}'_{1}),\varphi}
\end{math}
expands to 
\begin{equation}\label{eq:2017-12-08-20-27}
\begin{aligned}
   \Wedge{t\in[0,T_{1}]}c-
% f\bigl(
%    \mathcal{M}(\bu_{1})(t)(x_{1})
%    ,\dotsc,
%    \mathcal{M}(\bu_{1})(t)(x_{n})
%  \bigr)
\mathcal{M}(\bu_{1})(t)(x)
&  \le
  \Wedge{t\in[0,T_{1}]}c-
% f\bigl(
%    \mathcal{M}(\bu_{1}')(t)(x_{1})
%    ,\dotsc,
%    \mathcal{M}(\bu_{1}')(t)(x_{n})
%  \bigr)
\mathcal{M}(\bu_{1}')(t)(x)
\enspace.
\end{aligned}
\end{equation}
The first infimum in the above is taken over a compact domain $[0,T_{1}]$; therefore there exists $T\in [0,T_{1}]$ that achieves the infimum. Let $T$ be such a real number. The following is obvious.
\begin{equation}
\begin{aligned}
&   \Wedge{t\in[0,T_{1}]}c-
\mathcal{M}(\bu_{1})(t)(x)
 =
  \Wedge{t\in[0,T]}c-
\mathcal{M}(\bu_{1})(t)(x)
 = 
 c-
\mathcal{M}(\bu_{1})(T)(x)
\\&
\le
  \Wedge{t\in[0,T_{1}]}c-
\mathcal{M}(\bu_{1}')(t)(x)
\le
  \Wedge{t\in[0,T]}c-
\mathcal{M}(\bu_{1}')(t)(x)
\enspace.
\end{aligned}
\end{equation}
Another immediate consequence, derived using the causality of $\mathcal{M}$~(Def.~\ref{def:systemModel}), is
\begin{equation}\label{eq:2017-12-08-21-43}
 c-
\mathcal{M}(\bu_{1}|_{[0,T]})(T)(x)
 \;\le \;
 c-
\mathcal{M}(\bu'_{1}|_{[0,T]})(T)(x)\enspace.
\end{equation}
Our goal is to show
\begin{math}
   \sem{\mathcal{M}(\mathbf{u}_{1}|_{[0,T]}\cdot\mathbf{u}_{2}),\varphi}
  \le
  \sem{\mathcal{M}(\mathbf{u}'_{1}|_{[0,T]}\cdot \mathbf{u}_{2}),\varphi}
\end{math}.
\begin{align*}
&   \sem{\mathcal{M}(\mathbf{u}'_{1}|_{[0,T]}\cdot \mathbf{u}_{2}),\Box(x<c)}
\\
  &=
  \Wedge{t\in[0,T+T_{2}]}c-
\mathcal{M}(\mathbf{u}'_{1}|_{[0,T]}\cdot \mathbf{u}_{2})(t)(x)
% f\bigl(
%    \mathcal{M}(\mathbf{u}'_{1}|_{[0,T]}\cdot \mathbf{u}_{2})(t)(x_{1})
%    ,\dotsc,
%    \mathcal{M}(\mathbf{u}'_{1}|_{[0,T]}\cdot \mathbf{u}_{2})(t)(x_{n})
% \bigr)
\\
%-----------------------------------------------
  &=
  \Wedge{t\in[0,T]}c-
  \mathcal{M}(\mathbf{u}'_{1})(t)(x)
% f\bigl(
%    \mathcal{M}(\mathbf{u}'_{1})(t)(x_{1})
%    ,\dotsc,
%    \mathcal{M}(\mathbf{u}'_{1})(t)(x_{n})
% \bigr)
 \sqcap
  \Wedge{t\in(T,T+T_{2}]}c-
   \mathcal{M}(\mathbf{u}'_{1}|_{[0,T]}\cdot \mathbf{u}_{2}|_{[0,t-T]})(t)(x)
% f\bigl(
%    \mathcal{M}(\mathbf{u}'_{1}|_{[0,T]}\cdot \mathbf{u}_{2}|_{[0,t-T]})(t)(x_{1})
%    ,\dotsc,
%    \mathcal{M}(\mathbf{u}'_{1}|_{[0,T]}\cdot \mathbf{u}_{2}|_{[0,t-T]})(t)(x_{n})
% \bigr)
\tag*{$(*)$}
\\
%-----------------------------------------------
  &\ge
  \Wedge{t\in[0,T]}c-
   \mathcal{M}(\mathbf{u}_{1})(t)(x)
% f\bigl(
%    \mathcal{M}(\mathbf{u}_{1})(t)(x_{1})
%    ,\dotsc,
%    \mathcal{M}(\mathbf{u}_{1})(t)(x_{n})
% \bigr)
% \\
%   &\qquad 
\sqcap
  \Wedge{t\in(T,T+T_{2}]}c-
  \mathcal{M}(\mathbf{u}'_{1}|_{[0,T]}\cdot \mathbf{u}_{2}|_{[0,t-T]})(t)(x)
% f\bigl(
%    \mathcal{M}(\mathbf{u}'_{1}|_{[0,T]}\cdot \mathbf{u}_{2}|_{[0,t-T]})(t)(x_{1})
%    ,\dotsc,
%    \mathcal{M}(\mathbf{u}'_{1}|_{[0,T]}\cdot \mathbf{u}_{2}|_{[0,t-T]})(t)(x_{n})
% \bigr)
% \\
%   &\qquad\qquad
\quad\text{by~(\ref{eq:2017-12-08-20-27})}
\\
%-----------------------------------------------
  &\ge
  \Wedge{t\in[0,T]}c-
   \mathcal{M}(\mathbf{u}_{1})(t)(x)
%\\
%  &\qquad
 \sqcap
  \Wedge{t\in(T,T+T_{2}]}c-
   \mathcal{M}(\mathbf{u}_{1}|_{[0,T]}\cdot \mathbf{u}_{2}|_{[0,t-T]})(t)(x)
% f\bigl(
%    \mathcal{M}(\mathbf{u}_{1}|_{[0,T]}\cdot \mathbf{u}_{2}|_{[0,t-T]})(t)(x_{1})
%    ,\dotsc,
%    \mathcal{M}(\mathbf{u}_{1}|_{[0,T]}\cdot \mathbf{u}_{2}|_{[0,t-T]})(t)(x_{n})
% \bigr)
\tag*{$(\dagger)$}
\\
&=\cdots 
\\
&=   \sem{\mathcal{M}(\mathbf{u}_{1}|_{[0,T]}\cdot \mathbf{u}_{2}),\Box(x<c)}.
\end{align*}
In the above we heavily used the causality of $\mathcal{M}$~(Def.~\ref{def:systemModel}). For example, 
in the step $(*)$ above, causality is used in deriving $\mathcal{M}(\bu_{1}'|_{[0,T]}\cdot \bu_{2})(t)=\mathcal{M}(\bu_{1}')(t)$. In the step $(\dagger)$ we applied the monotonicity of $\mathcal{M}$ to the signals $\bu_{1}|_{[0,T]}$, $\bu'_{1}|_{[0,T]}$ and $\bu_{2}|_{[0,t-T]}$. Note that (\ref{eq:2017-12-08-21-43}) allows to do so. \myqed
\end{proof}

\end{document}